\DeclareMathOperator*{\E}{\mathbb{E}}
\let\Pr\relax
\DeclareMathOperator*{\Pr}{\mathbb{P}}
\DeclareMathSymbol{\qedsymb} {\mathord}{AMSa}{"04}
\newcommand{\eps}{\varepsilon}
\renewcommand{\epsilon}{\varepsilon}
\newcommand{\ceil}[1]{\left\lceil #1 \right\rceil}
\newcommand{\floor}[1]{\left\lfloor #1 \right\rfloor}
\newcommand{\oct}{\quad\quad}                                   
\newcommand{\R}{\mathbb{R}}
\newcommand{\inprod}[1]{\langle #1 \rangle}
\newcommand{\CorollaryName}[1]{\label{cor:#1}}
\newcommand{\EquationName}[1]{\label{eq:#1}}
\newcommand{\FactName}[1]{\label{fact:#1}}
\newcommand{\LemmaName}[1]{\label{lem:#1}}
\newcommand{\RemarkName}[1]{\label{rem:#1}}
\newcommand{\SectionName}[1]{\label{sec:#1}}
\newcommand{\TheoremName}[1]{\label{thm:#1}}
\newcommand{\FigureName}[1]{\label{fig:#1}}
\newcommand{\Corollary}[1]{Corollary~\ref{cor:#1}}
\newcommand{\Equation}[1]{Eq.\:\eqref{eq:#1}}
\newcommand{\Fact}[1]{Fact~\ref{fact:#1}}
\newcommand{\Lemma}[1]{Lemma~\ref{lem:#1}}
\newcommand{\Remark}[1]{Remark~\ref{rem:#1}}
\newcommand{\Section}[1]{Section~\ref{sec:#1}}
\newcommand{\Theorem}[1]{Theorem~\ref{thm:#1}}
\newcommand{\Figure}[1]{Figure~\ref{fig:#1}}
\newtheorem{theorem}{Theorem}
\newtheorem{corollary}[theorem]{Corollary}
\newtheorem{conjecture}[theorem]{Conjecture}
\newtheorem{definition}[theorem]{Definition}
\newtheorem{fact}[theorem]{Fact}
\newtheorem{lemma}[theorem]{Lemma}
\newtheorem{remark}[theorem]{Remark}
\newcommand{\proofbelow}{3pt}
\newcommand{\afterproof}{\hfill $\blacksquare$ \par \vspace{\proofbelow}}
\newcommand{\aftersubproof}{\hfill $\Box$ \par \vspace{\proofbelow}}
\renewenvironment{proof}{\noindent\textbf{Proof.}\,}{\afterproof}
\newenvironment{proofof}[1]{\noindent\textbf{Proof} \,(of #1).\,}{\afterproof}
\renewcommand{\th}{\ifmmode{^{\textrm{th}}}\else{\textsuperscript{th}\ }\fi}
\newcommand{\eqdef}{\mathbin{\stackrel{\rm def}{=}}}
\newcommand{\comment}[1]{}
\begin{document}

\author{Daniel M. Kane\footnote{Stanford University, Department of
  Mathematics. \texttt{dankane@math.stanford.edu}. This work was done while the author was supported by an NSF Graduate Research Fellowship.}\oct
Jelani Nelson\footnote{Harvard
  University, School of Engineering and Applied Sciences. \texttt{minilek@seas.harvard.edu}. This work was done while the author was supported by a Xerox-MIT Fellowship, and in part by the
Center for Massive Data Algorithmics (MADALGO) - a center of the Danish National Research Foundation.
}}

\date{}

\title{Sparser Johnson-Lindenstrauss Transforms}

\maketitle

\begin{abstract}
We give two different and simple constructions for dimensionality reduction in $\ell_2$ via linear mappings that are sparse: only an $O(\eps)$-fraction of entries in each column of our embedding matrices are non-zero to achieve distortion $1+\eps$ with high probability, while still achieving the asymptotically optimal number of rows. These are the first constructions to provide subconstant sparsity for all values of parameters, improving upon previous works of Achlioptas (JCSS 2003) and Dasgupta, Kumar, and Sarl\'{o}s (STOC 2010).  Such distributions can be used to speed up applications where $\ell_2$ dimensionality reduction is used.
\end{abstract}

\section{Introduction}\SectionName{intro}
The Johnson-Lindenstrauss lemma states:

\begin{lemma}[JL Lemma {\cite{JL84}}]\LemmaName{jl-lemma}
For any integer $d>0$, and any $0<\eps,\delta<1/2$,
there exists a probability distribution on $k\times d$ 
real matrices for $k = \Theta(\eps^{-2}\log(1/\delta))$ 
such that for any $x\in\R^d$,

$$ \Pr_S((1-\eps)\|x\|_2\le \|Sx\|_2 \le (1+ \eps)\|x\|_2) >
1 - \delta .$$
\end{lemma}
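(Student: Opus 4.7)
The plan is to exhibit a concrete distribution achieving the stated bound: let $S$ be a $k \times d$ matrix with i.i.d.\ entries $S_{ij} = g_{ij}/\sqrt{k}$, where each $g_{ij}$ is a standard normal. This is the original construction; random signs would work equally well but Gaussians simplify the computation via rotational invariance. Fix any unit vector $x \in \R^d$; I first want to identify the distribution of $\|Sx\|_2^2$, and then apply a concentration bound to it.

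For the distributional step, note that the $i$\th coordinate of $Sx$ equals $(1/\sqrt{k})\sum_j g_{ij} x_j$. Since the $g_{ij}$ are independent standard normals and $\|x\|_2 = 1$, this is distributed as $N(0,1/k)$, and the rows of $S$ being independent makes the $k$ coordinates of $Sx$ independent as well. Therefore
\[
k \cdot \|Sx\|_2^2 \;=\; \sum_{i=1}^k \Bigl(\sqrt{k}\,(Sx)_i\Bigr)^2
\]
is a chi-squared random variable with $k$ degrees of freedom. The JL bound thus reduces to the standard statement that $\chi_k^2$ concentrates around its mean $k$.

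The final step is a Chernoff bound on the MGF of $\chi_k^2$. For $|\lambda| < 1/2$ one has $\expect{e^{\lambda \chi_k^2}} = (1-2\lambda)^{-k/2}$, and optimizing the Markov inequality $\Pr[\chi_k^2 > (1+\eps)k] \le e^{-\lambda(1+\eps)k}\expect{e^{\lambda \chi_k^2}}$ at $\lambda = \eps/(2(1+\eps))$ yields (after elementary manipulation using $\log(1+\eps) \ge \eps - \eps^2/2$) a tail bound of the form $e^{-c\eps^2 k}$ for a constant $c > 0$ when $\eps \in (0,1/2)$. A symmetric argument handles the lower tail $\Pr[\chi_k^2 < (1-\eps)k]$. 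Choosing $k = C\eps^{-2}\log(1/\delta)$ for a sufficiently large constant $C$ then makes the union of the two tails at most $\delta$, completing the proof.

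The only mildly delicate part is the chi-squared tail bound: one must be careful with the expansion of $\log(1-2\lambda)$ to land on a bound that degrades like $\eps^2$ (not $\eps$) in the exponent, since the correct scaling $k \asymp \eps^{-2}\log(1/\delta)$ depends crucially on this. Everything else — the choice of distribution, the rotational-invariance reduction, and the final parameter selection — is essentially mechanical.
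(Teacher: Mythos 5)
Your proof is correct in approach, but note that the paper does not actually prove \Lemma{jl-lemma} itself --- it is stated as a known result with the proof deferred to the cited references, of which the Gaussian/chi-squared argument you give is essentially the one in \cite{DG03}. The paper's own constructions in Sections~\ref{sec:body} and~\ref{sec:random-hashes} \emph{do} yield the statement of \Lemma{jl-lemma} as a byproduct, but by a genuinely different route: they work with sparse sign matrices rather than dense Gaussian ones, and in place of the closed-form $\chi^2_k$ moment generating function they bound the $\ell$th moment of the error random variable $Z = \|Sx\|_2^2 - 1$ directly, via the Hanson--Wright inequality (Theorem~\ref{thm:dkn}) applied to a quadratic form in the random signs, or via an explicit combinatorial expansion of $\E[Z^\ell]$ over graph isomorphism classes. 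Your approach is simpler and more self-contained but gives only a fully dense matrix; the paper's is more involved precisely because it must control the extra correlations introduced by the hashing structure that enables $O(\eps^{-1}\log(1/\delta))$ column sparsity.

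One small slip worth fixing in your write-up: after substituting $\lambda = \eps/(2(1+\eps))$, the exponent becomes $-\tfrac{\eps k}{2} + \tfrac{k}{2}\log(1+\eps)$, and to conclude this is at most $-c\eps^2 k$ you need an \emph{upper} bound on $\log(1+\eps)$, namely $\log(1+\eps) \le \eps - \eps^2/3$ (valid for $0 < \eps < 1/2$), not the lower bound $\log(1+\eps) \ge \eps - \eps^2/2$ that you cite. The analogous care is needed for the lower tail with $\log(1-\eps)$. With that inequality corrected, the argument is sound and delivers $k = \Theta(\eps^{-2}\log(1/\delta))$ as required.
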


Proofs of the JL lemma can be found in
\cite{Achlioptas03,AV06,BOR10,DG03,FM88,HIM12,JL84,KN10,Matousek08}.
The value of $k$ in the JL lemma is optimal \cite{JW13}
(also see a later proof in \cite{KMN11}).

The JL lemma is a key ingredient in the JL flattening theorem, which
states that any $n$ points in Euclidean space can be embedded into
$O(\eps^{-2}\log n)$ dimensions so that all pairwise Euclidean
distances are preserved up to $1\pm\eps$.
The JL lemma is a useful tool for speeding up solutions to several
high-dimensional problems: closest pair, nearest neighbor, diameter,
minimum spanning tree, etc.  It also speeds up some clustering and
string processing algorithms, and can further be used to reduce the
amount of storage required to store a dataset, e.g.\ in streaming
algorithms. Recently it has also found applications in approximate
numerical algebra problems such as linear regression and low-rank
approximation \cite{CW09, Sarlos06}. See
\cite{Indyk01,Vempala04} for further discussions on
applications.

Standard proofs of the JL lemma take a distribution over dense matrices
(e.g. i.i.d. Gaussian or Bernoulli entries), and thus performing the
embedding na\"{i}vely takes $O(k\cdot \|x\|_0)$ time where $x$ has
$\|x\|_0$ non-zero entries. Several works have devised other
distributions which give faster embedding times
\cite{AC09,AL09,AL13,HV11,KW11,Vybiral11}, but all these methods
require $\Omega(d\log d)$ embedding time even for sparse vectors (even when $\|x\|_0 = 1$). This
feature is particularly unfortunate in streaming applications, where a
vector $x$ receives coordinate-wise updates of the form $x\leftarrow
x + v\cdot e_i$, so that to maintain some linear
embedding $Sx$ of $x$ we should repeatedly calculate $Se_i$ during
updates.  Since $\|e_i\|_0 = 1$, even the na\"{i}ve $O(k\cdot
\|e_i\|_0)$ embedding time method is faster than these approaches.  

Even aside from
streaming applications, several practical situations give rise to
vectors with $\|x\|_0 \ll d$.  
For example, 
a common similarity measure for comparing text documents in data mining and
information retrieval is cosine
similarity \cite{TSK05}, which is approximately preserved under any JL
embedding. Here, a document is
represented as a bag of
words with the dimensionality  $d$ being the
size of the lexicon, and we usually would not expect any
single document to contain anywhere near $d$ distinct words (i.e., we
expect sparse vectors).  
In networking
applications, if $x_{i,j}$ counts bytes sent from source $i$ to
destination $j$ in some time interval, then $d$ is the total number of IP
pairs, whereas we would not expect most pairs of IPs to
communicate with
each other. In linear algebra applications, a rating matrix $A$ may
for example have $A_{i,j}$ as user $i$'s score for item $j$ (e.g.\
the Netflix matrix where columns correspond to movies), and we would
expect that most users rate only
small fraction of all available items.

One way to speed up embedding time in the JL lemma for sparse vectors
is to devise a
distribution over sparse embedding matrices. This was first
investigated in \cite{Achlioptas03}, which gave a JL distribution where
only one third of the entries of each matrix in its support was
non-zero, without increasing the number of rows $k$ from 
dense constructions.
Later, the works \cite{CCF04,ThorupZhang12} gave a distribution over matrices
with only $O(\log(1/\delta))$ non-zero entries per column, but the
algorithm for estimating $\|x\|_2$
given the linear sketch then relied on a
median calculation, and thus these schemes
did not provide an embedding into $\ell_2$. In several applications,
such as nearest-neighbor search \cite{HIM12} and approximate numerical
linear algebra \cite{CW09, Sarlos06}, an embedding into a normed space
or even $\ell_2$ itself
is required, and thus median estimators cannot be used. Median-based
estimators also pose a problem when one wants to learn classifiers in
the dimension-reduced space via stochastic gradient descent, since in
this case the estimator needs certain differentiability properties
\cite{WDLSA09}. In fact, the work of \cite{WDLSA09} 
investigated JL distributions over sparse matrices for this reason, in
the context of collaborative spam filtering.
The work \cite{DKS10} later analyzed the JL distribution in
\cite{WDLSA09} and showed
that it can be realized
where for each matrix in the support of the distribution, each column
has at most $s =
\tilde{O}(\eps^{-1}\log^3(1/\delta))$\footnote{We say $g =
   \tilde{\Omega}(f)$ when $g =
   \Omega(f/ \mathrm{polylog}(f))$, $g = \tilde{O}(f)$ when $g =
   O(f\cdot \mathrm{polylog}(f))$, and $g = \tilde{\Theta}(f)$ when $g
   = \tilde{\Omega}(f)$ and $g = \tilde{O}(f)$ simultaneously.}
   non-zero entries, thus speeding
up the embedding time to $O(s\cdot \|x\|_0)$.
This ``DKS construction''
requires $O(ds \log k)$ bits of random seed to sample a 
matrix from their distribution.  The work of \cite{DKS10} left open
two main directions: (1) understand the
sparsity parameter $s$ that can be achieved in a JL distribution,
and (2) devise a sparse JL transform distribution which requires few
random bits to
sample from, for streaming applications where storing a long random
seed requires prohibitively large memory.

The previous work \cite{KN10} of the current authors made progress on
both these questions by showing $\tilde{O}(\eps^{-1}\log^2(1/\delta))$
sparsity was achievable by giving an alternative analysis of the
scheme of \cite{DKS10} which also only required
$O(\log(1/(\eps\delta))\log d)$ seed length. The work of \cite{BOR10}
later gave a tighter analysis
under the assumption $\eps < 1/\log^2(1/\delta)$,
improving the sparsity and seed length further by $\log(1/\eps)$ and
$\log\log(1/\delta)$ factors in this case. In \Section{tight} we show
that the DKS scheme {\em requires} $s =
\tilde{\Omega}(\eps^{-1}\log^2(1/\delta))$, and thus a departure from
their construction is required to obtain better sparsity. For a
discussion of other previous work concerning the JL lemma see
\cite{KN10}.

\begin{figure}
\begin{center}
\begin{tabular}{cc}
\scalebox{0.4}{\includegraphics{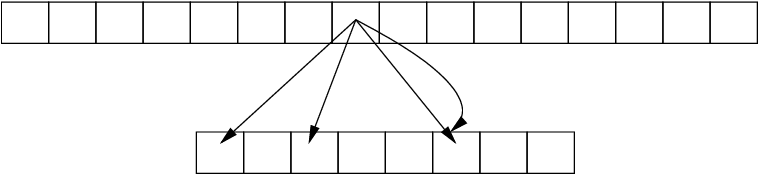}} &
\scalebox{0.4}{\includegraphics{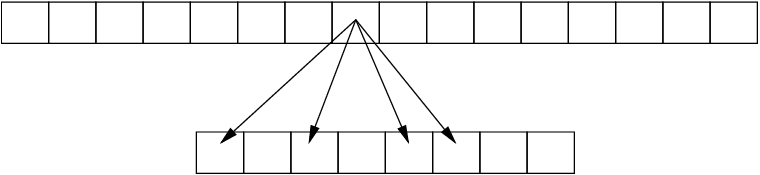}}\\
(a) & (b)
\end{tabular}
\begin{center}
\scalebox{0.4}{\includegraphics{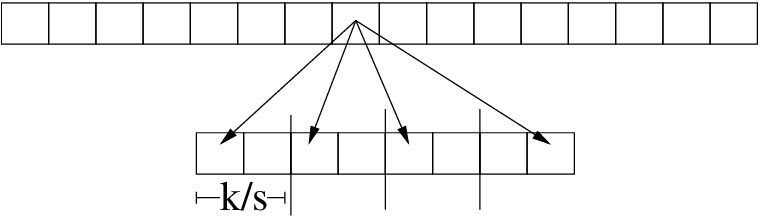}}\\
(c)
\end{center}
\caption{In all three constructions above, a vector in $\R^d$ is 
  projected down to $\R^k$. Figure (a) is the DKS construction in
  \cite{DKS10}, and the two constructions we give in this work are
  represented in (b) and (c).  The out-degree in each case is $s$, the
sparsity.}\FigureName{pictures}
\end{center}
\end{figure}

\bigskip

\paragraph{Main Contribution:} In this work, we give two new
constructions which achieve sparsity $s =
\Theta(\eps^{-1}\log(1/\delta))$ for $\ell_2$ embedding into optimal
dimension $k = \Theta(\eps^{-2}\log(1/\delta))$. This is the first sparsity
bound which is always $o(k)$ for the asymptotically optimal value of $k$ for all ranges of $\eps,\delta$. One of our
distributions can be sampled from using
$O(\log(1/\delta)\log d)$ uniform random bits.

\bigskip

It is also worth nothing that after the preliminary version of this work was published in \cite{KN12}, it was shown in \cite{NN13} that our bound is optimal up to an $O(\log(1/\eps))$ factor. That is, for any fixed constant $c>0$, {\em any} distribution satisfying \Lemma{jl-lemma} that is supported on matrices with $k = O(\eps^{-c}\log (1/\delta))$ and at most $s$ non-zero entries per column must have $s = \Omega(\eps^{-1}\log(1/\delta)/\log(1/\eps))$ as long as $k = O(d/\log(1/\eps))$. Note that once $k\ge d$ one can always take the distribution supported solely on the $d\times d$ identity matrix, giving $s=1$ and satisfying \Lemma{jl-lemma} with $\eps=0$.

We also describe variations on our constructions which achieve
sparsity $\tilde{O}(\eps^{-1}\log(1/\delta))$, but
which have 
much simpler analyses. We describe
our simpler constructions in \Section{body},
and our better
constructions
in \Section{random-hashes}. We show
in \Section{tight} that our analyses of the required sparsity in our
schemes are tight up to a constant factor.
In \Section{linalg} we discuss how our new schemes speed up
the numerical linear algebra algorithms in \cite{CW09} for 
approximate linear regression and best rank-$k$ approximation in the
streaming model of computation. We also show in \Section{linalg} 
that a wide range of JL distributions automatically provides sketches
for approximate
matrix product as defined in \cite{Sarlos06}.  While
\cite{Sarlos06} also
showed this, it lost a logarithmic factor in the target dimension due
to a union bound in its reduction; the work of \cite{CW09} avoided
this loss, but only for the JL distribution of random
sign matrices.  We show a simple and general reduction which incurs no
loss in
parameters. Plugging in our
sparse JL transform then yields faster linear algebra algorithms using
the same space.
In \Section{open} we state two open problems for future work.

\subsection{Our Approach}
Our constructions are depicted in
\Figure{pictures}. \Figure{pictures}(a) represents the DKS construction of
\cite{DKS10} in which each item is hashed to $s$ random target
coordinates with replacement. Our two schemes achieving $s =
\Theta(\eps^{-1}\log(1/\delta))$ are as follows. Construction (b) is
much like (a) except that we hash coordinates $s$ times 
  {\em without} replacement; we call this the {\it graph construction}, since hash locations are specified by a bipartite graph with $d$ left vertices, $k$ right vertices, and left-degree $s$.  In (c),
the target vector is
divided into $s$ contiguous blocks each of equal size $k/s$, and a
given coordinate in the original vector is hashed to a random location
in each block
(essentially this is the \textsc{CountSketch} of
\cite{CCF04}, 
though we use a higher degree of independence in our hash
functions); we call this the {\it block construction}. 
In all cases (a), (b), and (c), we randomly flip the sign of a
coordinate in the original vector and divide by $\sqrt{s}$ before
adding it in any location in the target vector. 

We give two different analyses for both our constructions (b) and
(c).  
Since we consider linear embeddings, without loss of generality we can
assume $\|x\|_2 = 1$, in which case the JL lemma follows by showing that $\|Sx\|_2^2 \in [(1-\eps)^2, (1+\eps)^2]$, which is implied by $|\|Sx\|_2^2 - 1| \le 2\eps - \eps^2$. Thus it suffices to show that for
any unit norm $x$,
\begin{equation}
\Pr_S(|\|Sx\|_2^2 - 1| > 2\eps - \eps^2) < \delta .\EquationName{suffices}
\end{equation}
We furthermore observe that both our graph and block constructions have the property that the entries of our embedding matrix $S$ can be written as
\begin{equation}
S_{i,j} = \eta_{i,j}\sigma_{i,j}/\sqrt{s} , \EquationName{matrix-entries}
\end{equation}
where the $\sigma_{i,j}$ are independent and uniform in $\{-1,1\}$, and $\eta_{i,j}$ is an indicator random variable for the event $S_{i,j}\neq 0$ (in fact in our analyses we will only need that the $\sigma_{i,j}$ are $O(\log(1/\delta))$-wise independent). Note that the $\eta_{i,j}$ are not independent, since in both constructions we have that there are exactly $s$ non-zero entries per column. Furthermore in the block construction, knowing that $\eta_{i,j} = 1$ for $j$ in some block implies that $\eta_{i,j'} = 0$ for all other $j'$ in the same block.

To outline our analyses, look at the random variable 
\begin{equation}
Z \eqdef \|Sx\|_2^2 - 1 = \frac 1s\cdot \sum_{r=1}^k\sum_{i\neq j\in[d]} \eta_{r,i}\eta_{r,j}\sigma_{r,i}\sigma_{r,j}x_i x_j . \EquationName{Zdef}
\end{equation}
Our proofs all use Markov's bound on the $\ell$th moment $Z^\ell$ to give $\Pr(|Z| > 2\eps - \eps^2) < (2\eps - \eps^2)^{-\ell} \cdot \E Z^\ell$ for $\ell =
\log(1/\delta)$ an even integer. The task is then to bound
$\E Z^\ell$. 
In our first
approach, we observe that $Z$ is a quadratic form in the $\sigma_{i,j}$ of \Equation{matrix-entries},
and thus its moments can be bounded via the Hanson-Wright inequality
\cite{HW71}.  This analysis turns out to reveal that the hashing to
coordinates in the target vector need not be done randomly, but can in
fact be specified by any sufficiently good code (i.e.\ the $\eta_{i,j}$ need not be random).
Specifically, it suffices that for any $j\neq j'\in [d]$, $\sum_{i=1}^k \eta_{i,j}\eta_{i,j'} = O(s^2/k)$. That is, no two columns have their non-zero entries in more than $O(s^2/k)$ of the same rows.
In (b), this translates to the columns of the embedding
matrix (ignoring the random signs and division by $\sqrt{s}$) to be
codewords in a constant-weight binary code of weight $s$ and minimum
distance $2s - O(s^2/k)$.  In (c), if for each $j\in[d]$ we
let $C_j$ be a length-$s$ vector with entries in $[k/s]$ specifying
where coordinate $j$ is mapped to in each block, it suffices for
$\{C_j\}_{j=1}^d$ to be a code of minimum distance
$s - O(s^2/k)$. It is fairly easy to see that if one wants
a deterministic hash function,
it is necessary for the columns of the embedding matrix to be
specified by a code: if two coordinates have their non-zeroes in many of the same rows,
it means those coordinates collide often.  Since collision is the source of error,
an adversary in this case could ask to embed a vector which has its
mass equally spread on these two coordinates, causing large error with large probability
over the choice of random signs.  What our analysis shows
is that not only is a good code necessary, but it is also
sufficient.

In our second analysis approach, we define
\begin{equation}
Z_r = \sum_{i\neq j\in[d]} \eta_{r,i}\eta_{r,j}\sigma_{r,i}\sigma_{r,j}x_i x_j . \EquationName{Zrdef}
\end{equation} 
so that
\begin{equation}
Z = \frac 1s\sum_{r=1}^k Z_r \EquationName{ZZr} .
\end{equation}
We show that to bound $\E Z^\ell$ it suffices to bound $\E Z_r^t$ for each $r\in[k], t\in [\ell]$. To bound $\E Z_r^t$, we expand
expand $Z_r^t$ to obtain a polynomial with roughly $d^{2t}$
terms. We view its monomials as being in correspondence with
graphs, group monomials that map to the same graph, then do some
combinatorics to make the expectation calculation feasible.  
We remark that a similar tactic of mapping monomials to graphs then carrying out combinatorial arguments is frequently used to analyze the eigenvalue spectrum of random matrices; see for example work of Wigner \cite{Wigner55}, or the work of F\"{u}redi and Koml\'{o}s \cite{FK81}. In our approach here, we assume that the random signs as well as the hashing to coordinates in the target vector are
done $O(\log(1/\delta))$-wise independently. This combinatorial approach of mapping to graphs played
a large role in our previous analysis of the DKS construction \cite{KN10}, as well as a later analysis of that construction in
\cite{BOR10}. 

We point out here that
\Figure{pictures}(c) is somewhat
simpler to implement, since there are simple constructions of
$O(\log(1/\delta))$-wise hash families \cite{CW79}.
\Figure{pictures}(b) on the other hand requires hashing without
replacement, which amounts to using random permutations and
can be derandomized using almost
$O(\log(1/\delta))$-wise
independent permutation families \cite{KNR09} (see \Remark{derand-perm}).


\section{Conventions and Notation}\SectionName{notation}
\begin{definition}
For $A\in\R^{n\times n}$,
the {\em Frobenius norm} of $A$ is
$\|A\|_F =
\sqrt{\sum_{i,j} A_{i,j}^2}$.
\end{definition}

\begin{definition}
For $A\in\R^{n\times n}$, the {\em operator norm} of $A$ is
$\|A\|_2 = \sup_{\|x\|_2 = 1} \|Ax\|_2$.
In the case $A$ is symmetric, this is also the
largest magnitude of an eigenvalue of $A$.
\end{definition}

Henceforth, all logarithms are base-$2$ unless explicitly
stated otherwise. For a positive integer $n$ we use $[n]$ to
denote the set $\{1,\ldots,n\}$.  
We will always be focused on embedding
a vector $x\in\R^d$ into $\R^k$, and we
assume $\|x\|_2 = 1$
without loss of generality (since our embeddings are linear).
All vectors $v$ are assumed to be column
vectors, and $v^T$ denotes its transpose.
We often implicitly assume that various quantities, such as $1/\delta$, are powers
of $2$ or $4$, which is without loss of generality. 
Space complexity bounds (as in \Section{linalg}), are always measured
in bits.

\section{Code-Based Constructions}\SectionName{body}
In this section, we provide analyses of our constructions (b) and (c)
in \Figure{pictures} when the non-zero entry locations are deterministic but satisfy a certain condition. In particular, in the analysis in this section we assume that for any $i\neq j\in [d]$, 
\begin{equation}
\sum_{r=1}^k \eta_{r,i}\eta_{r,j} = O(s^2/k) . \EquationName{few-collisions}
\end{equation}
That is, no two columns have their non-zero entries in more than $O(s^2/k)$ of the same rows. We show how to use error-correcting codes to ensure \Equation{few-collisions} in \Remark{code} for the block construction, and in \Remark{easy-perm} for the graph construction. Unfortunately this step will require setting $s$ to be slightly larger than the desired $O(\eps^{-1}\log(1/\delta))$. We give an alternate analysis in \Section{random-hashes} which avoids assuming \Equation{few-collisions} and obtains an improved bound for $s$ by not using deterministic $\eta_{r,i}$.

In what follows, we assume $k = C\cdot \eps^{-2}\log(1/\delta)$ for a sufficiently large
constant $C$, and that $s$ is some integer dividing $k$ satisfying $s \ge
2(2\eps - \eps^2)^{-1}\log(1/\delta) = \Theta(\eps^{-1}\log(1/\delta))$. We also assume that the $\sigma_{i,j}$ are $2\ell$-wise independent for $\ell = \log(1/\delta)$, so that $\E (\|Sx\|_2^2 - 1)^\ell$ is fully determined.

\paragraph{Analysis of \Figure{pictures}(b) and \Figure{pictures}(c) code-based constructions:}
Recall from \Equation{Zdef}
\begin{equation*}
Z \eqdef \|Sx\|_2^2 - 1 = \frac 1s \sum_{r=1}^k\sum_{i\neq j\in[d]}\eta_{r,i}\eta_{r,j}\sigma_{r,i}\sigma_{r,j}x_ix_j .
\end{equation*}
Note $Z$ is a quadratic form in
$\sigma$ which can be written as $\sigma^TT\sigma$ for a
$kd\times kd$ block-diagonal matrix $T$. There are $k$ blocks, each
$d\times d$, where in the $r$th block $T_r$ we have $(T_r)_{i,j} =
\eta_{r,i}\eta_{r,j}x_ix_j/s$ for $i\neq j$ and $(T_r)_{i,i} = 0$ for
all $i$. Now, $\Pr(|Z| > 2\eps-\eps^2) = \Pr(|\sigma^T T \sigma| > 2\eps-\eps^2)$.  
To obtain an upper bound for this probability, we use the Hanson-Wright inequality
combined with a Markov bound.

\begin{theorem}[Hanson-Wright inequality {\cite{HW71}}]\TheoremName{dkn}
Let $z = (z_1,\ldots,z_n)$ be a vector of i.i.d.\ Rademacher $\pm 1$
random variables.  For any symmetric $B\in\R^{n\times n}$ and
$\ell\ge 2$,
$$ \E\left|z^TBz - \mathrm{trace}(B)\right|^\ell \le
C^\ell\cdot
\max\left\{\sqrt{\ell}\cdot \|B\|_F, \ell\cdot \|B\|_2\right\}^\ell
$$
for some universal constant $C>0$ independent of $B,n,\ell$.
\end{theorem}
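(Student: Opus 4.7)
The plan is to follow the classical decoupling-plus-Khintchine strategy for bounding $L_\ell$ norms of a Rademacher chaos of order two. Because $z_i^2 = 1$ almost surely, the diagonal of $B$ contributes deterministically to $\mathrm{tr}(B)$, so
\[
z^T B z - \mathrm{tr}(B) = \sum_{i \neq j} B_{ij} z_i z_j ,
\]
and it suffices to control the $\ell$-th moment of this off-diagonal chaos. I would handle this in three steps: decouple to a bilinear form, apply Khintchine conditionally, then bound the moments of a remaining PSD quadratic form in an independent copy.

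For the first step I would introduce an independent Rademacher copy $z'$ and establish that $\E|\sum_{i\neq j} B_{ij} z_i z_j|^\ell \leq c^\ell \cdot \E|z^T B z'|^\ell$ for some universal $c$. This is the standard decoupling argument: pick a uniformly random set $T \subseteq [n]$, note that any ordered pair $(i,j)$ with $i \neq j$ satisfies $\{i \in T,\, j \notin T\}$ with probability $1/4$, and on that event swap $z_j$ for $z'_j$ in distribution by conditioning on $\{z_k : k \notin T\} \cup T$; Jensen against the random choice of $T$ then reassembles the full sum. Conditioning on $z'$, the bilinear form $z^T B z'$ is a Rademacher linear combination with coefficient vector $Bz'$, so Khintchine's inequality yields
\[
\E_z |z^T B z'|^\ell \leq (C_1\sqrt{\ell})^\ell\, \|Bz'\|_2^\ell ,
\]
and everything reduces to bounding $\E_{z'} \|Bz'\|_2^\ell$.

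For the third step I would use that each coordinate $(Bz')_i = \sum_j B_{ij} z'_j$ is sub-Gaussian with variance proxy $\sum_j B_{ij}^2 \le \|B\|_2^2$, so $(Bz')_i^2$ is sub-exponential and $\|Bz'\|_2^2 = \sum_i (Bz')_i^2$ has mean $\|B\|_F^2$. A Bernstein-type tail bound whose two regimes are governed by $\|B\|_F\|B\|_2$ (Gaussian regime) and $\|B\|_2^2$ (exponential regime), followed by $\sqrt{a+b} \leq \sqrt{a}+\sqrt{b}$ after integration, gives $\E\|Bz'\|_2^\ell \lesssim (\|B\|_F + \sqrt{\ell}\|B\|_2)^\ell$. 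Multiplying by the Khintchine factor produces $C_2^\ell(\sqrt{\ell}\|B\|_F + \ell\|B\|_2)^\ell \leq (2C_2)^\ell\max(\sqrt{\ell}\|B\|_F,\,\ell\|B\|_2)^\ell$, the desired bound.

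The hard part is the Bernstein step: one must identify the correct sub-exponential and variance parameters for $\|Bz'\|_2^2$ and express them in terms of $\|B\|_F$ and $\|B\|_2$, using facts like $\max_i (BB^T)_{ii} \leq \|B\|_2^2$ and $\mathrm{tr}((BB^T)^2) \leq \|B\|_2^2 \|B\|_F^2$, and then balance the two regimes so that everything folds into a single universal constant. A self-contained alternative that sidesteps Bernstein is to iterate the decoupling-Khintchine argument inductively on $\ell$ applied to the PSD quadratic form $(z')^T B^2 z'$, which produces the same bound with a slightly weaker constant but keeps the whole proof built out of just decoupling and Khintchine.
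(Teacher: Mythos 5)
The paper does not prove this theorem: it is quoted as a black box from Hanson and Wright (1971), so there is no in-paper argument to compare yours against. Your sketch is the modern decoupling route (roughly in the style of Rudelson--Vershynin), which is a legitimate and well-known way to re-derive Hanson--Wright; the original 1971 proof is quite different and does not use decoupling, so your proposal is ``a different route'' only in the sense of differing from the classical reference the paper cites, not from anything inside this paper.

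Within your sketch, steps one through three are sound: since $z_i^2=1$ the diagonal is deterministic, decoupling $\sum_{i\neq j}B_{ij}z_iz_j$ to $z^TBz'$ is the standard de la Pe\~na--Montgomery-Smith argument, and Khintchine conditionally on $z'$ is fine. The one place the argument as written would fail is the ``Bernstein step'' for $\E\|Bz'\|_2^\ell$: the summands $(Bz')_i^2$ are \emph{not} independent (the coordinates of $Bz'$ are correlated linear functionals of the same $z'$), so the off-the-shelf Bernstein inequality for sums of independent sub-exponential variables does not apply, and the quantities $\max_i (BB^T)_{ii}$ and $\mathrm{tr}((BB^T)^2)$ alone do not justify a Bernstein tail without independence. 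You would need either a comparison to the Gaussian case (where $\|Bg\|_2^2 = \sum_i \sigma_i^2 g_i^2$ by rotation invariance and independence is restored, with $\sigma_i$ the singular values), or the recursive argument you mention at the end. The recursion does close cleanly: by Minkowski,
$$\bigl\| \|Bz'\|_2^2 \bigr\|_{\ell/2} \le \|B\|_F^2 + \bigl\| (z')^T B^TBz' - \mathrm{tr}(B^TB) \bigr\|_{\ell/2},$$
and the last term is a Hanson--Wright quantity for the PSD matrix $B^TB$ at moment $\ell/2$, with $\|B^TB\|_F \le \|B\|_2\|B\|_F$ and $\|B^TB\|_2 = \|B\|_2^2$; an induction on $\ell$ (halving) with base case $\ell=2$ gives a universal constant because the map $C \mapsto c\sqrt{1+C}$ from the inductive step has a fixed point. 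So your proposal is essentially correct, but you should actually carry out the recursive version (or the Gaussian comparison) rather than invoking Bernstein for dependent terms, since that is where the substance of the proof lies.
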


We prove our construction satisfies the JL lemma by
applying
\Theorem{dkn} with $z=\sigma, B=T$.

\begin{lemma}\LemmaName{frobenius}
$\|T\|_F^2 = O(1/k)$.
\end{lemma}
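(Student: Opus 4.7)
The plan is to expand $\|T\|_F^2$ directly using block-diagonality, then reduce the inner sum to a Hamming-distance count controlled by the code's minimum distance. Since $T$ is block diagonal with blocks $T_1,\dots,T_s$, we have $\|T\|_F^2 = \sum_{r=1}^s \|T_r\|_F^2$, and by the definition of $T_r$ (zero on the diagonal, $x_i x_j \eta_{i,j,r}/s$ off-diagonal) this equals
\[
\frac{1}{s^2}\sum_{r=1}^s \sum_{i\neq j} x_i^2 x_j^2 \eta_{i,j,r}^2
\;=\; \frac{1}{s^2}\sum_{i\neq j} x_i^2 x_j^2 \sum_{r=1}^s \eta_{i,j,r},
\]
after swapping the order of summation and using $\eta_{i,j,r}^2 = \eta_{i,j,r}\in\{0,1\}$.

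The key observation is that $\sum_{r=1}^s \eta_{i,j,r}$ is exactly the number of coordinates $r\in[s]$ at which $h(i,r)=h(j,r)$, i.e., the number of positions at which codewords $C_i$ and $C_j$ agree. This equals $s - \Delta(C_i,C_j)$. Since $\mathcal{C}$ is an $(s,\log_{k/s} d, s - O(s^2/k))_{k/s}$ code, its minimum distance is at least $s - O(s^2/k)$, so the agreement is at most $O(s^2/k)$ for every pair $i\neq j$.

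Plugging this bound back in yields
\[
\|T\|_F^2 \;\le\; \frac{1}{s^2}\cdot O(s^2/k)\cdot \sum_{i\neq j} x_i^2 x_j^2 \;\le\; \frac{O(1)}{k}\cdot \|x\|_2^4 \;=\; O(1/k),
\]
using $\|x\|_2 = 1$ and $\sum_{i\neq j} x_i^2 x_j^2 \le \|x\|_2^4$.

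There is no real obstacle here: the lemma is essentially a bookkeeping step that translates the code-distance hypothesis into a Frobenius-norm bound. The only subtlety to be careful about is making sure that the diagonal exclusion $i\neq j$ is respected (the diagonal of $T_r$ is zeroed out deliberately, since those entries would be absorbed into $\|x\|_2^2$ rather than into $Z$), and that we are counting agreements (not disagreements) when converting $\sum_r \eta_{i,j,r}$ into a code-theoretic quantity.
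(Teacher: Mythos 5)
Your proof is correct and follows essentially the same argument as the paper's: expand $\|T\|_F^2$, swap the summation order, identify $\sum_r \eta_{i,j,r}$ as $s - \Delta(C_i,C_j)$, and apply the minimum-distance bound together with $\sum_{i\neq j} x_i^2 x_j^2 \le \|x\|_2^4 = 1$. Your write-up just makes explicit the intermediate steps (block-diagonality, $\eta^2=\eta$) that the paper compresses into a single displayed line.
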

\begin{proof}
$$ \|T\|_F^2 = \frac{1}{s^2}\cdot\sum_{i\neq j\in [d]}x_i^2x_j^2\cdot
\left(\sum_{r=1}^k\eta_{r,i}\eta_{r,j}\right) \le
O(1/k)\cdot \sum_{i\neq j\in[d]}x_i^2x_j^2 \le O(1/k)\cdot \|x\|_2^4 = O(1/k) ,$$
where the first inequality used \Equation{few-collisions}.
\end{proof}

\begin{lemma}\LemmaName{operator}
$\|T\|_2 \le 1/s$.
\end{lemma}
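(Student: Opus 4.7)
The plan is to exploit the block structure of $T$ twice: once because $T$ is block-diagonal across the $s$ repetitions $r \in [s]$, and once within a fixed repetition because the hash $h(\cdot,r)$ partitions $[d]$ by hash value.

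First, since $T$ is block-diagonal with blocks $T_1, \ldots, T_s$, we have $\|T\|_2 = \max_{r} \|T_r\|_2$, so it suffices to bound each $\|T_r\|_2$ by $1/s$. Fix $r$. For each value $v \in [k/s]$, let $S_{r,v} = \{i \in [d] : h(i,r) = v\}$. Because $\eta_{i,j,r} = 1$ iff $h(i,r) = h(j,r)$, the entries of $T_r$ that are possibly nonzero lie entirely within the diagonal blocks indexed by the sets $S_{r,v}$; that is, after permuting rows and columns, $T_r$ itself is block-diagonal across $v \in [k/s]$. Hence $\|T_r\|_2 = \max_v \|T_{r,v}\|_2$, where $T_{r,v}$ is the $|S_{r,v}| \times |S_{r,v}|$ block.

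Next, I would write each such block in a convenient rank-one form. Let $u = u_{r,v} \in \R^d$ be the vector with $u_i = x_i$ if $i \in S_{r,v}$ and $u_i = 0$ otherwise. Then on the coordinates in $S_{r,v}$, the matrix $s \cdot T_{r,v}$ is exactly $u u^T - D$, where $D$ is the diagonal matrix with $D_{ii} = x_i^2$ for $i \in S_{r,v}$ (the subtraction kills the diagonal entries, since $(T_r)_{ii} = 0$). For any unit vector $y$ supported on $S_{r,v}$,
\[
y^T (uu^T - D) y = (u \cdot y)^2 - \sum_{i \in S_{r,v}} x_i^2 y_i^2.
\]
Both summands are non-negative, so on the one hand this quantity is at most $(u\cdot y)^2 \le \|u\|_2^2$ by Cauchy--Schwarz, and on the other hand it is at least $-\max_{i \in S_{r,v}} x_i^2 \ge -\|u\|_2^2$. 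Therefore $\|uu^T - D\|_2 \le \|u\|_2^2$, giving $\|T_{r,v}\|_2 \le \|u_{r,v}\|_2^2 / s$.

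Finally, since the sets $\{S_{r,v}\}_{v \in [k/s]}$ partition $[d]$, we have $\sum_{v} \|u_{r,v}\|_2^2 = \|x\|_2^2 = 1$, so in particular $\max_v \|u_{r,v}\|_2^2 \le 1$. Combining, $\|T_r\|_2 \le 1/s$ for every $r$, which yields $\|T\|_2 \le 1/s$. I do not anticipate a real obstacle here; the only step that requires a little care is verifying the two-sided bound on $y^T(uu^T - D)y$ to conclude $\|uu^T - D\|_2 \le \|u\|_2^2$ (rather than the looser $\|u\|_2^2 + \max_i x_i^2$ one would get from a naive triangle inequality), and this is where the trick that both terms have the same sign structure pays off.
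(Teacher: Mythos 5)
Your proof is correct and uses essentially the same decomposition as the paper: write $s\,T_r$ as a difference of a positive semidefinite rank-structured matrix (your $uu^T$ per bucket; the paper's $S_r$ with eigenvectors the per-bucket restrictions of $x$) and the positive semidefinite diagonal $D$, then bound the operator norm of the difference by a two-sided quadratic-form argument. Your version just makes the block-diagonal structure within $T_r$ across hash buckets $v\in[k/s]$ explicit, which is the same fact the paper encodes by identifying the $u_t$'s as eigenvectors of $S_r$ spanning its image.
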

\begin{proof}
Since $T$ is block-diagonal, its eigenvalues are the eigenvalues of
each block. For a block $T_r$, write $T_r = (1/s)\cdot (S_r - D_r)$.
$D_r$ is diagonal with $(D_r)_{i,i} = \eta_{r,i}x_i^2$, and $(S_r)_{i,j} =
\eta_{r,i}\eta_{r,j}x_ix_j$.  Since $S_r$ and $D_r$ are
both positive semidefinite, we have $\|T\|_2 \le (1/s)\cdot
\max\{\|S_r\|_2, \|D_r\|_2\}$.  We have $\|D_r\|_2 \le \|x\|_{\infty}^2 \le
1$.  Define $u\in\R^d$ by $u_i = \eta_{r,i}x_i$ so $S_r = uu^T$. Thus
$\|S_r\|_2 = \|u\|_2^2 \le \|x\|_2^2 = 1$.
\end{proof}

By \Equation{suffices}, it now suffices to prove the following theorem.

\begin{theorem}\TheoremName{thisisit}
$\Pr_{\sigma}(|Z| > 2\eps - \eps^2) < \delta$.
\end{theorem}
\begin{proof}
By a Markov bound applied to $Z^\ell$ for $\ell$ an even integer,
$$\Pr_\sigma(|Z| > 2\eps - \eps^2) < (2\eps - \eps^2)^{-\ell}\cdot \E_\sigma Z^\ell .$$
Since $Z = \sigma^TT\sigma$ and $\mathrm{trace}(T) = 0$,
applying \Theorem{dkn} with $B=T$, $z = \sigma$, and $\ell =
\log(1/\delta)$ gives
\begin{equation}\EquationName{apply-spectral}
\Pr_\sigma(|Z| > \eps) < C^\ell\cdot
\max\left\{O(\eps^{-1})\cdot \sqrt\frac{\ell}{k},
  (2\eps - \eps^2)^{-1}\frac{\ell}{s}\right\}^\ell .
\end{equation}
since the $\ell$th moment is determined by $2\log(1/\delta)$-wise
independence of $\sigma$. We conclude the proof by noting that the
expression in \Equation{apply-spectral}
is at most $\delta$ for our choices for $s,k,\ell$.
\end{proof}

We now discuss how to choose the non-zero locations in $S$ to ensure \Equation{few-collisions}.

\begin{remark}\RemarkName{code}
\textup{
Consider the block construction, and for $i\in [d]$ let $C_i\in [k/s]^s$ specify the locations of the non-zero entries for column $i$ of $S$ in each of the $s$ blocks. Then \Equation{few-collisions} is equivalent to  $\mathcal{C} = \{C_1,\ldots,C_d\}$ being an error-correcting code with relative distance $1 - O(s/k)$, i.e.\ that no $C_i,C_j$ pair for $i\neq j$ agree in more than $O(s^2/k)$ coordinates. It is thus important to know whether such a code exists. Let $h:[d]\times[s]\rightarrow [k/s]$ be such that $h(i,r)$ gives the non-zero location in block $r$ for column $i$, i.e.\ $(C_i)_r = h(i,r)$. Note that having relative distance $1 - O(s/k)$ is to say that for every $i\neq j\in[d]$, $h(i,r) = h(j,r)$ for at most $O(s^2/k)$ values of $r$. For $r\in [s]$ let $X_r$ be an indicator random variable for the event $h(i,r) = h(j,r)$, and define $X = \sum_{r=1}^s X_r$. Then $\E X = s^2/k$, and if $s^2/k = \Omega(\log(d/\delta))$, then a Chernoff bound shows that $X = O(s^2/k)$ with probability at least $1-\delta/d^2$ over the choice of $h$ (in fact it suffices to use Markov's bound applied to the $O(\log(d/\delta))^{th}$ moment implied by the Chernoff bound so that $h$ can be $O(\log(d/\delta))$-wise independent, but we do not dwell on this issue here since \Section{random-hashes} obtains better parameters). Thus by a union bound over all $\binom{d}{2}$ pairs $i\neq j$, $\mathcal{C}$ is a code with the desired properties with probability at least $1-\delta/2$. Note that the condition $s^2/k = \Omega(\log(d/\delta))$ is equivalent to $s = \Omega(\eps^{-1}\sqrt{\log(d/\delta)\log(1/\delta)})$.
We also point out that we may assume without loss of generality that
$d = O(\eps^{-2}/\delta)$.  This is because there exists an
embedding into this dimension with sparsity $1$ using only $4$-wise
independence with distortion
$(1+\eps)$ and success probability $1-\delta/2$ \cite{CCF04,
  ThorupZhang12}.
It is worth noting
that in the construction in this section, potentially $h$
could be deterministic given an explicit code with our desired
parameters. 
}
\end{remark}

\begin{remark}\RemarkName{easy-perm}
\textup{It is also possible to use a code to specify the hash
  locations in the graph construction.  In particular, 
  let the $j$th entry of the $i$th column of the embedding matrix be
  the $j$th symbol of the $i$th codeword (which we call $h(i,j)$) in a
  weight-$s$ binary code of minimum distance $2s - O(s^2/k)$ for $s \ge
  2\eps^{-1}\log(1/\delta)$. Define
  $\eta_{i,j,r}$ for $i,j\in[d], r\in[s]$ as an indicator variable for
  $h(i,r) = h(j,r) = 1$.  Then, the error is again exactly as in
  \Equation{Zdef}. Also, as in
  \Remark{code}, such a code
  can be shown to exist via the probabilistic method (the Chernoff
  bound can be applied using negative dependence, followed by a union
  bound) as long as $s =
  \Omega(\eps^{-1}\sqrt{\log(d/\delta)\log(1/\delta)})$. We omit the
  details since \Section{random-hashes} obtains better parameters.
}
\end{remark}


\begin{remark}\RemarkName{random-code}
\textup{
Only using \Equation{few-collisions}, it is impossible to improve our sparsity bound further.
For example, consider an instantiation of the block construction in which \Equation{few-collisions} is satisfied. Create a new set of $\eta_{r,i}$ which change only in the case $r=1$ so that $\eta_{1,i} = 1$ for all $i$, so that \Equation{few-collisions} still holds.
In our construction this corresponds to all
indices colliding in the first chunk of $k/s$ coordinates, which
creates an error term of $(1/s) \cdot \sum_{i\neq j} x_ix_j\sigma_{r,i}\sigma_{r,j}$.  Now, suppose $x$ consists of $t =
(1/2)\cdot\log(1/\delta)$ entries each with value $1/\sqrt{t}$.  Then, with
probability $\sqrt{\delta} \gg \delta$, all these entries receive the
same sign under $\sigma$ and contribute a total error of $\Omega(t/s)$
in the first chunk alone.  We thus need $t/s = O(\eps)$, which implies
$s = \Omega(\eps^{-1}\log(1/\delta))$.
}
\end{remark}

\section{Random Hashing Constructions}\SectionName{random-hashes}
In this section, we show that if the hash functions $h$ described in
\Remark{code} and \Remark{easy-perm} are not specified by fixed
codes, but
rather are chosen at random from some family of sufficiently high
independence, then one can achieve sparsity
$O(\eps^{-1}\log(1/\delta))$ (in the case of \Figure{pictures}(b), we
actually need almost k-wise independent {\em permutations}).
Recall our bottleneck in reducing the sparsity in \Section{body} was
actually obtaining the codes,
discussed in \Remark{code} and \Remark{easy-perm}.


We perform our analysis by bounding the $\ell^{th}$ moment of $Z = \|Sx\|_2^2 - 1$ from first principles for $\ell = \Theta(\log(1/\delta))$ an even integer (for this particular scheme, it seems the Hanson-Wright inequality does not simplify any details of the proof). To show \Equation{suffices} we then use Markov's inequality to say $\Pr(|Z|>\lambda) < \lambda^{-\ell}\cdot \E Z^{\ell}$. 
Although the $\eta_{i,j}$ are specified differently in the two constructions, in both cases they are easily seen to be {\it negatively correlated}; that is, for any subset $T\subseteq [k]\times [d]$ (in fact in our proof we will only be concerned with $|T| \le \ell$) we have $\E\prod_{(i,j)\in T} \eta_{i,j} \le (s/k)^{|T|}$. Also, each construction has $\sum_{i=1}^k \eta_{i,j} = s$ with probability $1$ for all $j\in [d]$, and thus, recalling the definition of $Z_r$ from \Equation{Zrdef},
$$Z = \frac 1 s\cdot \sum_{r=1}^k\sum_{i\neq j\in [d]} x_i x_j \sigma_{r,i}\sigma_{r,j} \eta_{r,i}\eta_{r,j} = \frac 1s \cdot \sum_{r=1}^k Z_r .$$

We first bound the $t^{th}$ moment of each $Z_r$ for $1\le t \le \ell$. As in the Frobenius norm moment bound of \cite{KN10}, and also used later in \cite{BOR10},
the main idea is to construct a correspondence between the monomials appearing in $Z_r^t$ and certain graphs. Notice
\begin{equation}
Z_r^t = \sum_{\substack{i_1,\ldots,i_t,j_1,\ldots,j_t\in[d]\\i_1\neq j_1, \ldots,i_t\neq j_t}}\prod_{u=1}^t \eta_{r,i_u}\eta_{r,j_u} x_{i_u}x_{j_u} \sigma_{r,i_u}\sigma_{r,j_u} . \EquationName{expand-it}
\end{equation}
To each monomial above we associate a directed multigraph with labeled edges whose vertices correspond to the distinct $i_u$ and $j_u$. An $x_{i_u}x_{j_u}$ term corresponds to a directed edge with label $u$ from the vertex corresponding to $i_u$ to the vertex corresponding to $j_u$. The basic idea we use to bound $\E Z_r^t$ is to group these monomials based on their associated graphs.

\begin{figure*}
\begin{center}
\includegraphics[scale=0.85]{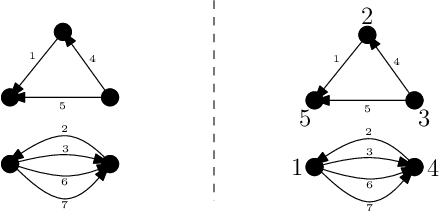} 
\caption{Example of a graph in $\mathcal{G}_t$ on the left with $v=5$, $t = 7$ and $j_1=j_5, i_1 = j_4, i_4 = i_5, j_2 = i_3 = i_6 = i_7, i_2 = j_3 = j_6 = j_7$. Example graph with the same restrictions on the right, but in $\mathcal{G}'_t$.}\FigureName{graph_example}
\end{center}
\end{figure*}

\begin{lemma}\LemmaName{random-graphs}
For $t>1$ an integer, $\E_{\eta,\sigma} Z_r^t \le t (2e^2)^t\cdot \begin{cases} (s/k)^2 \ & t < 2\ln(k/s) \\ (t/\ln(k/s))^t \ &\mathrm{otherwise} \end{cases}$.
\end{lemma}
\begin{proof}
We have
\begin{equation}
\E_{\eta,\sigma} Z_r^t = \sum_{\substack{i_1,\ldots,i_t,j_1,\ldots,j_t\in[d]\\i_1\neq j_1, \ldots,i_t\neq j_t}} \left(\prod_{u=1}^t x_{i_u}x_{j_u}\right) \cdot \left(\E_\sigma \prod_{u=1}^t \sigma_{r,i_u}\sigma_{r,j_u}\right) \cdot \left(\E_\eta\prod_{u=1}^t \eta_{r,i_u}\eta_{r,j_u}\right) .\EquationName{firstsum}
\end{equation}

Define $\mathcal{G}_t$ as the set of directed multigraphs with $t$ edges having distinct labels in $[t]$ and no self-loops, with between $2$ and $t$ vertices (inclusive), and where every vertex has non-zero and even degree (we use degree to denote the sum of in- and out-degrees). Let $f$ map variable sequences to their corresponding graph. That is, we draw a directed edge labeled $u$ from the vertex representing $i_u$ to that representing $j_u$ for $u = 1,\ldots,t$, where one vertex represents all the $i_u,j_u$ which are assigned the same element of $[d]$ (see \Figure{graph_example}). For a graph $G$, let $v$ be its number of vertices, and let $d_u$ be the degree of vertex $u$. By construction every monomial maps to a graph with $t$ edges. Also we need only consider graphs with all even vertex degrees since a monomial whose graph has at least one vertex with odd degree will have at least one random sign $\sigma_{i,r_u}$ appearing an odd number of times and thus have expectation zero. Then,
\allowdisplaybreaks
\begin{align}
\nonumber \E_{\eta,\sigma} Z_r^t & =
\sum_{G\in\mathcal{G}_t}\sum_{\substack{i_1\neq
    j_1,\ldots,i_t \neq
    j_t\in[d]\\f((i_u,j_u)_{u=1}^t)
  = G}} \left(\prod_{u=1}^t x_{i_u}x_{j_u}\right) \cdot \E_\eta \prod_{u=1}^t
\eta_{r,i_u}\eta_{r,j_u} \\
{}&= \sum_{G\in\mathcal{G}_t}\sum_{\substack{i_1\neq
    j_1,\ldots,i_t \neq
    j_t\in[d]\\f((i_u,j_u)_{u=1}^t)
  = G}} \left(\prod_{u=1}^t x_{i_u}x_{j_u}\right) \cdot \left(\frac
sk\right)^v \EquationName{neg-cor}\\
{}&\le \sum_{G\in\mathcal{G}_t}
\left(\frac sk\right)^v\cdot
v! \cdot
\frac{1}{\binom{t}{d_1/2,\ldots,d_v/2}} \EquationName{binom}\\
{}&= \sum_{G\in\mathcal{G}_t'}
\left(\frac sk\right)^v\cdot
\frac{1}{\binom{t}{d_1/2,\ldots,d_v/2}} \EquationName{useprime}\\
{}&\le (e/2)^t \cdot \sum_{v=2}^t \left(\frac sk\right)^v \cdot
\frac{1}{t^t}\cdot \left(
\sum_{G\in\mathcal{G}'_t}
\prod_{u=1}^v \sqrt{d_u}^{d_u}\right)\EquationName{lastg},
\end{align}
where $\mathcal{G}'_t$ is the set of all directed multigraphs as in $\mathcal{G}_t$, but in which vertices are labeled as well, with distinct labels in $[v]$ (see \Figure{graph_example}; the vertex labels can be arbitrarily permuted).

\Equation{neg-cor} used that $\eta_{r,1},\ldots,\eta_{r,d}$ are independent for any $r$.
For \Equation{binom}, note that $(\|x\|_2^2)^t = 1$, and the coefficient of $\prod_{u=1}^v x_{a_u}^{d_u}$ in its expansion for $\sum_{u=1}^v d_u = 2t$ is $\binom{t}{d_1/2,\ldots,d_v/2}$. Meanwhile, the coefficient of this monomial when summing over all $i_1\neq j_1, \ldots, i_t\neq j_t$ for a particular $G\in\mathcal{G}_t$ is at most $v!$. For \Equation{useprime}, we move from graphs in $\mathcal{G}_t$ to those in $\mathcal{G}_t'$, and for any $G\in\mathcal{G}_t$ there are exactly $v!$ ways to label vertices. This is because for any graph $G\in\mathcal{G}_t$ there is a canonical way of labeling the vertices as $1,\ldots,v$ since there are no isolated vertices. Namely, the vertices can be labeled in increasing order of when they are first visited by an edge when processing edges in order of increasing label (if two vertices are both visited for the first time simultaneously by some edge, then we can break ties consistently using the direction of the edge). Thus the vertices are all identified by this canonical labeling, implying that the $v!$ vertex labelings all give distinct graphs in $\mathcal{G}'_t$. \Equation{lastg} follows since $t! \ge t^t/e^t$ and 
$$\prod_{u=1}^v (d_u/2)! \le \prod_{u=1}^v 2^{-d_u/2} \sqrt{d_u}^{d_u} = 2^{-\sum_{u=1}^v d_u/2} \prod_{u=1}^v \sqrt{d_u}^{d_u} = 2^{-t} \prod_{u=1}^v \sqrt{d_u}^{d_u}. $$

The summation over $G$ in \Equation{lastg} is over
the $G\in\mathcal{G}_t'$ with $v$ vertices. Let us bound this summation for some fixed choice of vertex degrees $d_1,\ldots,d_v$. For any given $i$, consider the set of all graphs $\mathcal{G}''_i$ on $v$ labeled vertices with distinct labels in $[v]$, and with $i$ edges with distinct labels in $[i]$ (that is, we do not require even edge degrees, and some vertices may even have degree $0$). For a graph $G\in\mathcal{G}''_i$, let $d_u'$ represent the degree of vertex $u$ in $G$. For $a_1,\ldots,a_v>0$ define the function
\begin{equation}
S_i(a_1,\ldots,a_v) = \sum_{G\in\mathcal{G}''_i} \prod_{u=1}^v \sqrt{a_u}^{d'_u} .
\EquationName{induct-contribution}
\end{equation}
Let $\mathcal{G'}_t(d_1,\ldots,d_v)$ be those graphs $G\in\mathcal{G}'_t$ with $v$ vertices such that vertex $u$ has degree $d_u$. Then
$$ \sum_{\substack{G\in\mathcal{G'}_t(d_1,\ldots,d_v)}} \prod_{u=1}^v \sqrt{d_u}^{d_u} \le S_t(d_1,\ldots,d_v) $$
since $\mathcal{G}'_t(d_1,\ldots,d_v) \subset \mathcal{G}''_t$. To upper bound $S_t(a_1,\ldots,a_v)$, note $S_0(a_1,\ldots,a_v) = 1$. For $i>1$, note any graph in $\mathcal{G''}_i$ can be formed by taking a graph $G\in\mathcal{G''}_{i-1}$ and adding an edge labeled $i$ from $u$ to $w$ for some vertices $u\neq w$ in $G$. This change causes $d_u',d_w'$ to both increase by $1$, whereas all other degrees stay the same. Thus considering \Equation{induct-contribution},
$$ S_{i+1}(a_1,\ldots,a_v)/S_i(a_1,\ldots,a_v) \le \left(\sum_{u\neq w\in[v]} \sqrt{a_u}\cdot \sqrt{a_w}\right)
 \le \left(\sum_{u=1}^v \sqrt{a_u}\right)^2 \le \left(\sum_{u=1}^v a_u\right)\cdot v ,$$
with the last inequality using Cauchy-Schwarz. Thus by induction, $S_t(a_1,\ldots,a_v) \le (\sum_{u=1}^v a_u)^t\cdot v^t$. Since $\sum_{u=1}^v d_u = 2t$, we have $S_t(d_1,\ldots,d_v) \le (2tv)^t$. We then have that the summation in \Equation{lastg} is at most the number of choices of even $d_1,\ldots,d_v$ summing to $2t$ (there are $\binom{t-1}{v-1}<2^t$ such choices), times $(2tv)^t$, implying
$$ \E_{\eta,\sigma} Z_r^t \le (2e)^t\cdot \sum_{v=2}^t \left(\frac sk\right)^v \cdot
v^t .$$
By differentiation, the quantity $(s/k)^v v^t$ is maximized for $v =
\max\left\{2, t/\ln(k/s)\right\}$ (recall $v \ge 2$),
giving our lemma.
\end{proof}

\begin{corollary}\CorollaryName{simplified-graph}
For $t>1$ an integer, $\E_{\eta,\sigma} Z_r^t \le t (2e^3)^t (s/k)^2  t^t$.
\end{corollary}
\begin{proof}
We use \Lemma{random-graphs}. In the case $t<2\ln(k/s)$ we can multiply the $(s/k)^2$ term by $t^t$ and still obtain an upper bound, and in the case of larger $t$ we have $(t/\ln(k/s))^t \le t^t$ since $k\ge s$. Also when $t \ge 2\ln(k/s)$ we have $e^t (s/k)^2 \ge 1$, so that $t(2e^2)^t t^t \le t(2e^3)^t (s/k)^2 t^t$.
\end{proof}

\begin{theorem}\TheoremName{graph-thm}
For some $s\in\Theta(\eps^{-1}\log(1/\delta)), k \in \Theta(\eps^{-2}\log(1/\delta))$, we have $\Pr_{h,\sigma}(|Z| > 2\eps - \eps^2) < \delta$.
\end{theorem}
\begin{proof}
We choose $\ell$ an even integer to be specified later. Using \Equation{ZZr} and $\E Z_r = 0$ for all $r$,
\allowdisplaybreaks
\begin{align}
\nonumber \E Z^\ell &= \frac{1}{s^\ell}\cdot
\sum_{q=1}^{\ell/2} \sum_{\substack{r_1<\ldots<r_q\in[k]\\\ell_1,\ldots,\ell_q\\ \forall i\
    \ell_i > 1\\ \sum_i \ell_i = \ell}}
\binom{\ell}{\ell_1,\ldots,\ell_q} \cdot \E \prod_{i=1}^q
Z_{r_i}^{\ell_i}\\
{}&\le  \frac{1}{s^\ell}\cdot
\sum_{q=1}^{\ell/2} \sum_{\substack{r_1<\ldots<r_q\in[k]\\\ell_1,\ldots,\ell_q\\ \forall i\
    \ell_i > 1\\ \sum_i \ell_i = \ell}}
\binom{\ell}{\ell_1,\ldots,\ell_q} \cdot \prod_{i=1}^q
\E Z_{r_i}^{\ell_i} \EquationName{neg-cor2}\\
{}&\le \frac 1{s^\ell} \sum_{q=1}^{\ell/2} \sum_{\substack{r_1<\ldots<r_q\in[k]\\\ell_1,\ldots,\ell_q\\ \forall i\ \ell_i > 1\\ \sum_i \ell_i = \ell}} \frac{\ell!}{\prod_{i=1}^q \ell_i!} \cdot \left(\prod_{i=1}^q \ell_i\right)\cdot (2e^3)^\ell \cdot \left(\frac sk\right)^{2q} \cdot \prod_{i=1}^q \ell_i^{\ell_i} \EquationName{use-cor}\\
{}&\le \frac 1{s^\ell} \sum_{q=1}^{\ell/2} \sum_{\substack{r_1<\ldots<r_q\in[k]\\\ell_1,\ldots,\ell_q\\ \forall i\ \ell_i > 1\\ \sum_i \ell_i = \ell}} e^{-q}\cdot \ell! \cdot \left(\prod_{i=1}^q \ell_i\right)\cdot (2e^4)^\ell \cdot \left(\frac sk\right)^{2q}\EquationName{simplify1}\\
{}&\le \frac 1{s^\ell} \sum_{q=1}^{\ell/2} \sum_{\substack{r_1<\ldots<r_q\in[k]\\\ell_1,\ldots,\ell_q\\ \forall i\ \ell_i > 1\\ \sum_i \ell_i = \ell}} e^{-q}\cdot \ell! \cdot (4e^4)^\ell \cdot \left(\frac sk\right)^{2q}\EquationName{simplify2}\\
{}&\le \left(\frac{4e^3(\ell+1)}s\right)^\ell\cdot (\ell+1)\cdot \sum_{q=1}^{\ell/2} \sum_{\substack{r_1<\ldots<r_q\in[k]\\\ell_1,\ldots,\ell_q\\ \forall i\ \ell_i > 1\\ \sum_i \ell_i = \ell}} e^{-q}\cdot \left(\frac sk\right)^{2q}\EquationName{simplify3}\\
{}&\le \left(\frac{8e^3(\ell+1)}s\right)^\ell\cdot (\ell+1)\cdot \sum_{q=1}^{\ell/2}
e^{-q}\cdot \binom{k}{q}\cdot \left(\frac sk\right)^{2q}\EquationName{middle-calc} \\
{}&\le \left(\frac{8e^3(\ell+1)}s\right)^\ell\cdot (\ell+1)\cdot \sum_{q=1}^{\ell/2}
\left(\frac {s^2}{qk}\right)^{q}\EquationName{last-expr}
\end{align}

\Equation{neg-cor2} follows since the expansion of $\prod_i Z_{r_i}^{\ell_i}$ into monomials contains all nonnegative terms, in which the participating $\eta_{r,i}$ terms are negatively correlated, and thus $\E \prod_i Z_{r_i}^{\ell_i}$ is term-by-term dominated when expanding into a sum of monomials by the case when the $\eta_{r,i}$ are independent. \Equation{use-cor} uses \Corollary{simplified-graph}, and \Equation{simplify1} uses $\ell_i! \ge e(\ell_i/e)^{\ell_i}$. \Equation{simplify2} compares geometric and arithmetic means, giving $\prod_{i=1}^q \ell_i \le (\sum_{i=1}^q \ell_i/q)^q \le (\ell/q)^q \le \binom{\ell}{q} < 2^\ell$. \Equation{simplify3} bounds $\ell! \le (\ell+1)\cdot ((\ell+1)/e)^\ell$. \Equation{middle-calc} follows since there are $\binom{k}{q}$ ways to choose the $r_i$, and there are at most $2^{\ell - 1}$ ways to choose the $\ell_i$ summing to $\ell$.  Taking derivatives shows that the right hand side of \Equation{last-expr} is maximized for $q = \max\{1,s^2/(ek)\}$, which will be bigger than $1$ and less than $\ell/2$ by our choices of $s,k,\ell$ that will soon be specified. Then $q = s^2/(ek)$ gives a summand of $e^q \le e^{\ell/2}$.  We choose $\ell \ge \ln(\delta^{-1}(\ell+1)\ell/2) = \Theta(\log(1/\delta))$ and $s \ge 8e^4\sqrt{e}(\ell+1)/(2\eps - \eps^2) = \Theta(\eps^{-1}\log(1/\delta))$ so that \Equation{last-expr} is at most $(2\eps-\eps^2)^\ell\cdot \delta$. Then to ensure $s^2/(ek) \le \ell/2$ we choose $k = 2s^2/(e\ell) = \Theta(\eps^{-2}\log(1/\delta))$. The theorem then follows by Markov's inequality.
\end{proof}

\begin{remark}\RemarkName{derand-perm}
\textup{
In order to use fewer random bits to sample from the graph construction, we can use the following implementation. We realize the distribution over $S$ via two hash functions
$h:[d]\times [k]\rightarrow \{0,1\}$ and $\sigma:[d]\times[s]\rightarrow
\{-1,1\}$. The function $\sigma$ is drawn from from a $2\log(1/\delta)$-wise
independent family.  The function $h$ has the property that for any
$i$, exactly $s$ distinct $r\in[k]$ have $h(i,r) = 1$; in particular, we
pick $d$ seeds $\log(1/\delta)$-wise independently to determine
$h_i$ for $i=1,\ldots,d$, and where each $h_i$ is drawn
from a $\gamma$-almost $2\log(1/\delta)$-wise independent family of
permutations on $[d]$ for $\gamma = (\eps s/(d^2
k))^{\Theta(\log(1/\delta))}$. The seed length required for any one
such permutation is $O(\log(1/\delta)\log d + \log(1/\gamma)) =
O(\log(1/\delta)\log d)$
\cite{KNR09}, and thus we can pick $d$ such seeds
$2\log(1/\delta)$-wise independently using total seed length
$O(\log^2(1/\delta)\log d)$. We then let $h(i,r) = 1$ iff some
$j\in[s]$ has $h_i(j) = r$.
Recall that a $\gamma$-almost
$\ell$-wise independent family of permutations from $[d]$ onto itself
is a family of permutations $\mathcal{F}$ where the image
of any fixed $\ell$ elements in $[d]$ has statistical distance at most
$\gamma$ when choosing a random $f\in\mathcal{F}$ when compared with
choosing a uniformly random permutation $f$. Now, there are
$(kd^2)^{\ell}$ monomials in the expansion of $Z^\ell$. In each such
monomial, the coefficient of the $\E \prod_u h(i_u,r_u) h(j_u,r_u)$
term is at most $s^{-\ell}$.  In the end, we want $\E_{h,\sigma} Z^\ell <
O(\eps)^{\ell}$ to apply Markov's inequality.  Thus, we want
$(kd^2/s)^{\ell} \cdot \gamma < O(\eps)^{\ell}$.
}
\end{remark}

\begin{remark}\RemarkName{manyeps}
\textup{
  It is worth noting that if one wants distortion $1\pm\eps_i$ with
  probability $1-\delta_i$ simultaneously for all $i$ in some set
  $S$, our proof of \Theorem{graph-thm} 
  reveals that it suffices
  to set $s = C\cdot \sup_{i\in S} \eps_i^{-1}\log(1/\delta_i)$ and $k
  = C\cdot \sup_{i\in S} \eps_i^{-2}\log(1/\delta_i)$.
}
\end{remark}

\section{Tightness of analyses}\SectionName{tight}
In this section we show that sparsity
$\Omega(\eps^{-1}\log(1/\delta))$ is required in \Figure{pictures}(b)
and \Figure{pictures}(c), even if the hash functions used are
completely random. We also show that sparsity
$\tilde{\Omega}(\eps^{-1}\log^2(1/\delta))$ is required in the DKS
construction (\Figure{pictures}(a)), nearly matching the upper bounds
of \cite{BOR10,KN10}.
Interestingly, all three of our proofs of (near-)tightness of analyses
for these three constructions use the same hard input vectors.  In
particular, if $s = o(1/\eps)$, then we show that a vector with $t =
\floor{1/(s\eps)}$ entries each of value $1/\sqrt{t}$ incurs large
distortion with large probability.  If $s = \Omega(1/\eps)$ but is
still not sufficiently large, we show that the vector
$(1/\sqrt{2},1/\sqrt{2},0,\ldots,0)$ incurs large distortion with
large probability (in fact, for the DKS scheme one can even take the
vector $(1,0,\ldots,0)$).

\subsection{Near-tightness for DKS Construction}\SectionName{dks-tight}
The main theorem of this section is the following.

\begin{theorem}\TheoremName{dks-tight}
The DKS construction of \cite{DKS10} requires sparsity $s =
\Omega(\eps^{-1}\cdot \ceil{\log^2(1/\delta)/\log^2(1/\eps)})$ to
achieve
distortion $1\pm\eps$ with success probability $1 - \delta$.
\end{theorem}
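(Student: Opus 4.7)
The plan is to prove the contrapositive: for $s$ below the claimed threshold, we exhibit a unit input whose DKS sketch incurs distortion exceeding $\eps$ with probability strictly larger than $\delta$. Following the opening of this section, I split into two regimes; the interesting one (where the extra $\log^2(1/\delta)/\log^2(1/\eps)$ factor becomes non-trivial) is $s = \Omega(\eps^{-1})$, handled using $x = e_1$. The boundary $s = o(\eps^{-1})$ is handled analogously using the flat vector with $t = \lfloor 1/(s\eps)\rfloor$ equal non-zero entries of value $1/\sqrt t$, where the heavy event becomes many \emph{distinct} coordinates colliding into a single bin rather than many copies of one coordinate.

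For $x = e_1$, one writes $\|Sx\|_2^2 = (1/s)\sum_{j=1}^k V_j^2$, where $V_j$ is the signed sum of the $s$ i.i.d.\ copies of coordinate $1$ that hash to bin $j$. Set $j^\star = \lceil 2\sqrt{\eps s}\rceil$ and let $E$ be the event that \emph{exactly} $j^\star$ of the $s$ copies land in bin $1$, all with sign $+1$. A direct binomial estimate gives
\begin{equation*}
\Pr[E] \;=\; \binom{s}{j^\star}(1/k)^{j^\star}(1 - 1/k)^{s - j^\star} 2^{-j^\star} \;\ge\; \Omega(1)\cdot \Bigl(\frac{s}{2 j^\star k}\Bigr)^{j^\star}.
\end{equation*}
Conditional on $E$, $V_1 = j^\star$ is fixed and the remaining $s - j^\star$ copies are i.i.d.\ uniform over bins $\ne 1$ with independent signs, so a one-line first/second moment computation gives $\E[\|Sx\|_2^2 \mid E] = 1 + j^\star(j^\star - 1)/s \ge 1 + 2\eps$ and $\mathrm{Var}[\|Sx\|_2^2 \mid E] = O(1/k) = O(\eps^2/\log(1/\delta))$. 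Chebyshev then yields $\Pr[\|Sx\|_2^2 > 1 + \eps \mid E] \ge 1/2$ provided $\log(1/\delta)$ is at least an absolute constant.

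It remains to lower-bound $\Pr[E]$. Plugging $s = c\eps^{-1}(\log(1/\delta)/\log(1/\eps))^2$, $k = \Theta(\eps^{-2}\log(1/\delta))$, and $j^\star = \Theta(\sqrt c \log(1/\delta)/\log(1/\eps))$ into the bound above and taking logarithms, the dominant contribution to $\log(s/(2 j^\star k))$ is $-\log(1/\eps)$, so that $\Pr[E] \gtrsim \delta^{O(\sqrt c)}$. For $c$ below an absolute constant this exceeds $2\delta$, giving $\Pr[|\|Sx\|_2^2 - 1| > \eps] \ge \Pr[E]/2 > \delta$ and contradicting the claimed success probability. The main place where care is required is precisely this logarithmic calculation, which converts the chosen scales for $j^\star$, $s$, and $k$ into a fractional power of $\delta$; the conditional concentration and the binomial lower bound are both routine once $j^\star$ is chosen correctly.
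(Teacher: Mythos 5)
Your approach is the same as the paper's main case: take $x=e_1$, let $E$ be the event that exactly $q\approx 2\sqrt{\eps s}$ of the $s$ replicated copies land in a fixed bin with matching signs, and then argue via a conditional first/second moment computation plus Chebyshev that $E$ implies distortion exceeding $\eps$ with constant conditional probability. The probability lower bound you target, $\Pr[E]\gtrsim\delta^{O(\sqrt c)}$, is also the paper's.

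There is, however, a genuine gap in your intermediate bound
\[
\Pr[E] \;=\; \binom{s}{j^\star}(1/k)^{j^\star}(1-1/k)^{s-j^\star}2^{-j^\star} \;\ge\; \Omega(1)\cdot\Bigl(\frac{s}{2j^\star k}\Bigr)^{j^\star}.
\]
The $\binom{s}{j^\star}\ge(s/j^\star)^{j^\star}$ step is fine, but the claim that $(1-1/k)^{s-j^\star}=\Omega(1)$ is false in general. With $s\asymp c\eps^{-1}\log^2(1/\delta)/\log^2(1/\eps)$ and $k\asymp C\eps^{-2}\log(1/\delta)$ one gets $s/k\asymp(c/C)\eps\log(1/\delta)/\log^2(1/\eps)$, which tends to $\infty$ as $\delta\to 0$ whenever $\eps$ is bounded away from $0$ (e.g.\ $\eps=1/4$ gives $s/k=\Theta(\log(1/\delta))$), so $(1-1/k)^s\approx e^{-s/k}$ is a polynomial power of $\delta$, not $\Omega(1)$. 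The paper is careful on precisely this point: condition (3) in its proof explicitly reserves a $\delta^{1/6}$ budget for this factor and controls it via Fact~\ref{fact:mr95}. Your final conclusion happens to survive because $e^{-s/k}=\delta^{O(c/C)}$ and $c/C\ll\sqrt{c}$ for small $c$, so the extra factor can be absorbed into $\delta^{O(\sqrt c)}$ after shrinking $c$, but as written the $\Omega(1)$ step is wrong and must be replaced by this calculation.

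A second, smaller point: you dispose of everything outside the main regime in one sentence ("$s=o(\eps^{-1})$ is handled analogously"). The paper actually needs two separate sub-arguments there: the flat vector with $t=\lfloor 1/(s\eps)\rfloor$ equal entries for $s\le 1/(2\eps)$, and a distinct "three pairs collide in three distinct bins" argument for $1/(2\eps)<s\le 4/\eps$, where $j^\star$ would be a tiny constant and the single-bin event $E$ that your argument relies on is not obviously frequent enough (its probability picks up a factor $(s/k)^{j^\star}\asymp(\eps/\log(1/\delta))^{j^\star}$ rather than the paper's $(s^2/k)^3\asymp\log^{-3}(1/\delta)$). Your unified argument can likely be pushed through this range after tuning $c$, but it requires checking, not just an appeal to analogy.
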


Before proving \Theorem{dks-tight}, we recall the DKS
construction (\Figure{pictures}(a)).  First, we replicate
each coordinate $s$ times while
preserving the $\ell_2$ norm.  That
is, we produce the vector $\tilde{x} =
(x_1,\ldots,x_1,x_2,\ldots,x_2,\ldots,x_d,\ldots,x_d)/\sqrt{s}$, where
each $x_i$ is replicated $s$ times.  Then, pick a random $k\times
ds$ embedding matrix $A$ for $k = C\eps^{-2}\log(1/\delta)$ where each
column has exactly one non-zero entry, in a location defined by
some random function $h:[ds]\rightarrow [k]$, and where this non-zero
entry is $\pm 1$, determined by some random function
$\sigma:[ds]\rightarrow \{-1,1\}$.  The value $C>0$ is some fixed
constant. The final embedding is $A$ applied to $\tilde{x}$.  We are
now ready to prove \Theorem{dks-tight}. The proof is similar to that
of \Theorem{tight-scheme}.

Our proof will use the following standard fact.

\begin{fact}[{\cite[Proposition B.3]{MR95}}]\FactName{mr95}
For all $t,n\in\R$ with $n\ge 1$ and $|t| \le n$,
$$e^t (1 - t^2/n) \le (1 + t/n)^n \le e^t .$$
\end{fact}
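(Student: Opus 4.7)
The statement splits naturally into two one-sided inequalities. For the upper bound $(1 + t/n)^n \le e^t$, I will apply the elementary inequality $1 + x \le e^x$ (valid for all real $x$) with $x = t/n$; since $|t| \le n$ forces $1 + t/n \ge 0$, both sides are nonnegative, so raising to the $n$th power preserves the inequality and yields the bound directly.

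For the lower bound, my first move is to dispose of the uninteresting regime $|t| \ge \sqrt{n}$ trivially: there $1 - t^2/n \le 0$, so $e^t(1 - t^2/n) \le 0 \le (1 + t/n)^n$ and there is nothing to prove. So I may assume $|t| < \sqrt{n}$, which ensures that $1 + t/n > 0$ and $1 - t^2/n > 0$, allowing logarithms freely. Define $\phi(t) := n\log(1 + t/n) - t - \log(1 - t^2/n)$; the desired inequality is equivalent to $\phi(t) \ge 0$. Since $\phi(0) = 0$, it suffices to show that $t = 0$ is the global minimum of $\phi$ on $(-\sqrt{n}, \sqrt{n})$, which I will do by computing $\phi'$.

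Differentiating term by term gives $\phi'(t) = \tfrac{1}{1+t/n} - 1 + \tfrac{2t/n}{1 - t^2/n}$. The plan is to combine these two nonconstant pieces over the common denominator $1 + t/n$ (using the factorization $1 - t^2/n = (1 - t/n)(1 + t/n)$), factor out $t/n$, and watch the cancellation collapse the expression to $\phi'(t) = t/(n - t)$. Since $n - t > 0$ throughout our domain, this derivative has the same sign as $t$, so $\phi$ is strictly decreasing on $(-\sqrt{n}, 0]$ and strictly increasing on $[0, \sqrt{n})$; hence $\phi(t) \ge \phi(0) = 0$ everywhere, which is the lower bound.

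The only nonroutine step is the derivative simplification, and the algebra above is gentle enough that I do not expect any real obstacle. The proof is otherwise standard calculus.
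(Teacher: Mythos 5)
The paper does not prove this fact; it cites it directly from Motwani--Raghavan (Proposition B.3), so there is no paper proof to compare against. Your overall plan is sound, but it contains a concrete algebraic error in the one step you flag as ``nonroutine.'' The factorization you invoke, $1 - t^2/n = (1 - t/n)(1 + t/n)$, is false: expanding the right side gives $1 - t^2/n^2$, not $1 - t^2/n$ (the correct factorization would use $t/\sqrt{n}$). Consequently the claimed collapse to $\phi'(t) = t/(n-t)$ does not happen. A direct check at $n=4$, $t=1$ gives $\phi'(1) = \tfrac{4}{5} - 1 + \tfrac{2}{3} = \tfrac{7}{15}$, whereas $t/(n-t) = \tfrac13$.

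The good news is that the gap is repairable without changing your strategy. Carrying out the algebra correctly,
\[
\phi'(t) = \frac{n}{n+t} - 1 + \frac{2t}{n - t^2}
= -\frac{t}{n+t} + \frac{2t}{n-t^2}
= \frac{t\,(t^2 + 2t + n)}{(n - t^2)(n + t)} .
\]
On the domain $|t| < \sqrt{n}$ (with $n \ge 1$) both denominator factors are positive, and the numerator factor $t^2 + 2t + n = (t+1)^2 + (n-1)$ is nonnegative, vanishing only at $(t,n) = (-1,1)$, which lies on the excluded boundary $|t| = \sqrt{n}$. So $\phi'(t)$ does indeed have the sign of $t$, and your monotonicity argument goes through; you just need to replace the incorrect simplification with this one. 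The upper-bound half and the reduction of the lower bound to $|t| < \sqrt{n}$ are both fine as written.
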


\bigskip

\begin{proofof}{\Theorem{dks-tight}}
First suppose $s \le 1/(2\eps)$. Consider a vector with $t =
\floor{1/(s\eps)}$ non-zero coordinates each of value $1/\sqrt{t}$.
If there is exactly one pair $\{i,j\}$ that collides under $h$, and
furthermore the signs agree under $\sigma$, the $\ell_2$ norm squared
of our embedded vector will be $(st-2)/(st) + 4/(st)$.  Since $1/(st)
\ge \eps$, this quantity is at least $1+2\eps$.  The event of exactly
one pair $\{i,j\}$ colliding occurs with probability
\begin{align*}
 \binom{st}{2} \cdot \frac 1k \cdot \prod_{i=0}^{st-2}(1 - i/k) &
\ge  \Omega\left(\frac{1}{\log(1/\delta)}\right) \cdot (1 -
\eps/2)^{1/\eps}\\
&{} =\Omega(1/\log(1/\delta)) ,
\end{align*}
which is much larger than $\delta/2$ for $\delta$ smaller than some
constant.  Now,
given a collision, the colliding items have the same sign with
probability $1/2$.

We next consider the case $1/(2\eps) < s \le 4/\eps$. Consider the
vector $x = (1,0,\ldots,0)$.  If there are exactly three pairs
$\{i_1,j_1\},\ldots, \{i_3,j_3\}$ that collide under $h$ in three
distinct target coodinates, and furthermore the signs agree under
$\sigma$, the $\ell_2$ norm squared of our embedded vector will be
$(s - 6)/(s) + 12/(s) > 1 + 3\eps/2$. The event of three pairs
colliding occurs with probability
\begin{align*}
 \binom{s}{2}\binom{s-2}{2}\binom{s-4}{2} \cdot \frac{1}{3!}\cdot
 \frac{1}{k^3}\cdot
 \prod_{i=0}^{s-4}(1 - i/k) &
\ge  \Omega\left(\frac{1}{\log^3(1/\delta)}\right) \cdot (1 -
\eps/8)^{4/\eps}\\
&{}= \Omega(1/\log^3(1/\delta)) ,
\end{align*}
which is much larger than $\delta/2$ for $\delta$ smaller than some
constant.  Now,
given a collision, the colliding items have the same sign with
probability $1/8$.

We lastly consider the case $4/\eps < s \le
2c\eps^{-1}\log^2(1/\delta)/\log^2(1/\eps)$ for some constant
$c>0$ (depending on $C$) to be determined later.  First note this case
only exists when $\delta = O(\eps)$.
Define $x = (1,0,\ldots,0)$. Suppose there exists
an integer $q$ so that
\begin{enumerate}
\item $q^2/s \ge 4\eps$
\item $q/s < \eps$
\item $(s/(qk))^q(1 - 1/k)^s > \delta^{1/3}$.
\end{enumerate}

First we show it is possible to satisfy the above conditions
simultaneously for our range of $s$. We set $q = 2\sqrt{\eps s}$,
satisfying
item 1 trivially, and item 2 since $s > 4/\eps$.
For item 3, \Fact{mr95} gives
$$(s/(qk))^q \cdot (1 - 1/k)^s \ge \left(\frac{s}{qk}\right)^q \cdot
e^{-s/k} \cdot \left(1 -\frac{s}{k^2}\right) .$$
The $e^{-s/k} \cdot (1 - (s/k^2))$ term is at least $\delta^{1/6}$ by
the settings of $s,k$, and the $(s/(qk))^q$ term is also at least
$\delta^{1/6}$ for $c$ sufficiently small. 

Now, consider the event $\mathcal{E}$ that exactly $q$ of the $s$
copies of $x_1$ are hashed to $1$ by $h$, and to $+1$ by
$\sigma$.  
If $\mathcal{E}$ occurs,
then coordinate $1$ in the target vector contributes $q^2/s \ge 4\eps$
to $\ell_2^2$ in the target vector by item 1 above, whereas these
coordinates only contribute $q/s < \eps$ to $\|x\|_2^2$ by item 2
above, thus causing error at least $3\eps$.
Furthermore, the $s-q$ coordinates which do not hash to $1$ are being
hashed to a vector of length $k - 1 = \omega(1/\eps^2)$ with random
signs, and thus these coordinates have their $\ell_2^2$ contribution
preserved up to $1\pm o(\eps)$ with constant probability by
Chebyshev's inequality.  It thus just remains to show that
$\Pr(\mathcal{E}) \gg \delta$.
We have
\begin{align*}
\Pr(\mathcal{E}) &= \binom{s}{q} \cdot k^{-q} \cdot \left(1 - \frac
  1k\right)^{s - q}\cdot 1/2^q\\
{}&\ge \left(\frac{s}{qk}\right)^q \cdot \left(1 - \frac
  1k\right)^{s}\cdot \frac {1}{2^q}\\
{}& > \delta^{1/3} \cdot \frac{1}{2^q}.
\end{align*}

The $2^{-q}$ term is 
$\omega(\delta^{1/3})$ and thus
overall $\Pr(\mathcal{E}) = \omega(\delta^{2/3}) \gg \delta$.
\end{proofof}

\subsection{Tightness of \Figure{pictures}(b) analysis}
\begin{theorem}\TheoremName{tight-other-scheme}
For $\delta$ smaller than a constant depending on
$C$ for $k = C\eps^{-2}\log(1/\delta)$, the graph construction
of \Section{random-hashes} requires
$s = \Omega(\eps^{-1}\log(1/\delta))$ to obtain  distortion $1\pm\eps$
with probability $1-\delta$.
\end{theorem}
\begin{proof}
First suppose $s \le 1/(2\eps)$. We consider a vector with $t =
\floor{1/(s\eps)}$
non-zero coordinates each of value $1/\sqrt{t}$.
If there is exactly one set $i,j,r$ with $i\neq j$ such that $S_{r,
  i}, S_{r,j}$ are both non-zero for the embedding matrix $S$
(i.e., there is exactly one collision), then the total error is
$2/(ts) \ge 2\eps$.  It just remains to show that this happens with
probability larger than $\delta$. The probability
of this occurring is
\begin{align*}
s^2 \cdot \binom{t}{2}\cdot \frac 1k\cdot 
\frac{k - s}{k - 1} \cdots \frac{k-2s+2}{k-s+1}\cdot
\left(\frac{(k-2s+1)!}{(k-ts+1)!}\right)\cdot
\left(\frac{(k-s)!}{k!}\right)^{t-2} & \ge \frac{s^2t^2}{2k}\cdot
\left(\frac{k-st}{k}\right)^{st}\\
&{}\ge \frac{s^2t^2}{2k}\cdot
\left(1 - \frac{s^2t^2}{k}\right)\\
&{}= \Omega(1/\log(1/\delta)) .
\end{align*}

Now consider the case $1/(2\eps) < s < c\cdot \eps^{-1}\log(1/\delta)$
for some small constant $c$. Consider the vector
$(1/\sqrt{2},1/\sqrt{2},0,\ldots,0)$.
Suppose there are exactly
$2s\eps$ collisions, i.e. $2s\eps$ distinct values of $r$ such that
$S_{r,i}, S_{j,r}$ are both non-zero
(to avoid tedium we disregard floors and ceilings and
just assume $s\eps$ is an integer). Also, suppose that in each
colliding row $r$ we
have $\sigma(1, r) = \sigma(2, r)$.
Then, the total error would be $2\eps$.  It just
remains to show that this happens with probability larger than
$\delta$. The probability of signs agreeing in exactly $2\eps s$
chunks is $2^{-2\eps s} > 2^{-2c\log(1/\delta)}$, which is larger than
$\sqrt{\delta}$ for $c < 1/4$. The probability of exactly $2\eps s$
collisions is
\begin{align}
\nonumber \binom{s}{2\eps s} \cdot \left(\prod_{i=0}^{2\eps s - 1}\frac{s - i}{k
  - i}\right)\cdot \left(\prod_{i=0}^{s - 2\eps s - 1}\frac{k - i - s}{k -
  i -2\eps s}\right) &\ge 
\left(\frac{1}{2\eps}\right)^{2\eps s}\cdot
\left(\frac{(1-2\eps) s}{k}\right)^{2\eps s}\cdot \left(1 -
  \frac{s}{k-s}\right)^{s - 2\eps s}\\
&{}\ge \left(\frac{s}{4\eps k}\right)^{2\eps s}\cdot \left(1 -
  \frac{2s}{k}\right)^s .\EquationName{reusable}
\end{align}

It suffices for the right hand side to be at least $\sqrt{\delta}$
since $h$ is independent of $\sigma$, and thus the total probability
of error larger than $2\eps$ would be greater than $\sqrt{\delta}^2 =
\delta$. Taking natural logarithms, it suffices to have
$$ 2\eps s\ln\left(\frac{4\eps k}{s}\right) - s\ln\left(1 -
  \frac {2s}k\right) \le \ln(1/\delta)/2 . $$
Writing $s = q/\eps$ and $a = 4C\log(1/\delta)$, the left hand side is
$2q\ln(a/q) + \Theta(s^2/k)$. 
Taking a derivative shows $2q\ln(a/q)$ is monotonically increasing for
$q < a/e$. Thus as long as $q < ca$ for a sufficiently small constant
$c$, $2q\ln(a/q) < \ln(1/\delta)/4$.  Also, the $\Theta(s^2/k)$ term
is at most $\ln(1/\delta)/4$ for $c$ sufficiently small.
\end{proof}

\subsection{Tightness of \Figure{pictures}(c) analysis}
\begin{theorem}\TheoremName{tight-scheme}
For $\delta$ smaller than a constant depending on
$C$ for $k = C\eps^{-2}\log(1/\delta)$, the block construction
of \Section{random-hashes} requires
$s = \Omega(\eps^{-1}\log(1/\delta))$ to obtain  distortion $1\pm\eps$
with probability $1-\delta$.
\end{theorem}
\begin{proof}
First suppose $s \le 1/(2\eps)$. Consider a vector with $t =
\floor{1/(s\eps)}$
non-zero coordinates each of value $1/\sqrt{t}$.
If there is exactly one set $i,j,r$ with $i\neq j$ such that $h(i, r)
= h(j, r)$ (i.e. exactly one collision), then the total error is
$2/(ts) \ge 2\eps$.  It just remains to show that this happens with
probability larger than $\delta$.

The probability of exactly one collision is
\begin{align}
\nonumber s\cdot \left[\frac{t!\cdot \binom{k/s}{t}}{(k/s)^t}\right]^{s-1} \cdot
\binom{t}{2} \cdot \left(\frac ks\right)\cdot \left[\frac{(t-2)!\cdot
    \binom{k/s - 1}{t - 2}}{(k/s)^t}\right] & \ge s\cdot \left(1 -
  \frac{st}{k}\right)^{t(s-1)} \cdot \binom{t}{2}\cdot \left(\frac
  sk\right)\left(1 - \frac{st}{k}\right)^{t-2}\\
\nonumber {}&= \frac{s^2 t(t-1)}{2k}\cdot \left(1 - \frac{st}{k}\right)^{st-2}\\
\nonumber {}&\ge \frac{s^2 t(t-1)}{2k}\cdot \left(1 -
  \frac{s^2t^2}{k}\right)\\
\nonumber {}&= \Omega(1/\log(1/\delta)) ,
\end{align}
which is larger than $\delta$ for $\delta$ smaller than a
universal constant.
 
Now consider $1/(2\eps) < s < c\cdot \eps^{-1}\log(1/\delta)$ for some
small constant $c$.  Consider the vector $x =
(1/\sqrt{2},1/\sqrt{2},0,\ldots,0)$.  Suppose there are exactly
$2s\eps$ collisions, i.e. $2s\eps$ distinct values of $r$ such that
$h(1,r) = h(2,r)$ (to avoid tedium we disregard floors and ceilings and
just assume $s\eps$ is an integer). Also, suppose that in each
colliding chunk $r$ we
have $\sigma(1, r) = \sigma(2, r)$.
Then, the total error would be $2\eps$.  It just
remains to show that this happens with probability larger than
$\delta$. The probability of signs agreeing in exactly $2\eps s$
chunks is $2^{-2\eps s} > 2^{-2c\log(1/\delta)}$, which is larger than
$\sqrt{\delta}$ for $c < 1/4$. The probability of exactly $2\eps s$
collisions is
$$
\binom{s}{2\eps s} \left(\frac sk\right)^{2\eps s}\left(1 - \frac
  sk\right)^{(1 - 2\eps)s} \ge \left(\frac{s}{2\eps k}\right)^{2\eps
  s} \left(1 - \frac sk\right)^{(1 - 2\eps)s}
$$

The above is at most $\sqrt{\delta}$, by the analysis following
\Equation{reusable}.  
Since $h$ is independent of $\sigma$, the total probability
of having error larger than $2\eps$ is greater than $\sqrt{\delta}^2 =
\delta$.
\end{proof}

\section{Faster numerical linear algebra streaming
  algorithms}\SectionName{linalg}
The works of \cite{CW09,Sarlos06} gave algorithms to solve various
approximate numerical linear algebra problems given small memory and a
only one or few passes over an input matrix. They considered models where
one only sees a row or column at a time of some matrix
$A\in\R^{d\times n}$. Another update model considered was
the turnstile streaming model. In this model,
the matrix $A$ starts off as the all zeroes matrix.  One
then sees a sequence of $m$ updates
$(i_1,j_1,v_1),\ldots,(i_m,j_m,v_m)$, where each update $(i,j,v)$
triggers the change $A_{i,j} \leftarrow A_{i,j} + v$. The goal in all
these models is to compute some functions of $A$ at the end of seeing
all rows, columns, or turnstile updates.  The algorithm should
use little memory (much less than what is
required to store $A$ explicitly). Both works \cite{CW09,Sarlos06}
solved problems such as approximate linear regression and best
rank-$k$ approximation by reducing to the problem of sketches for 
  approximate matrix products.  
Before delving further, first we give a definition.

\begin{definition}
Distribution $\mathcal{D}$ over $\R^{k\times d}$ has {\em
  $(\eps,\delta,\ell)$-JL moments} if for all $x$ with $\|x\|_2 = 1$,
$$ \E_{S\sim\mathcal{D}}\left|\|Sx\|_2^2 -
    1\right|^\ell \le \eps^\ell\cdot \delta .$$
\end{definition}

Now, the following theorem is a generalization of 
\cite[Theorem 2.1]{CW09}. The theorem states that any distribution
with JL moments
also provides a sketch for approximate matrix products. A
similar statement was made in \cite[Lemma 6]{Sarlos06}, but that
statement was slightly weaker in its parameters because it resorted to
a union bound, which we avoid by using Minkowski's inequality. 

\begin{theorem}\TheoremName{improved-matrix}
Given $\eps,\delta\in(0,1/2)$,
let $\mathcal{D}$ be any distribution over matrices with $d$ columns
with the $(\eps,\delta,\ell)$-JL moment property for some $\ell\ge 2$.
Then for $A,B$ any real matrices with $d$ rows,
$$
\Pr_{S\sim\mathcal{D}}\left(\|A^TS^TSB - A^TB\|_F  > 3\eps\|A\|_F\|B\|_F\right)
< \delta .
$$
\end{theorem}
\begin{proof}
Let $x,y\in\R^d$ each have $\ell_2$ norm $1$.  Then 
$$\inprod{Sx, Sy} = \frac{\|Sx\|_2^2 + \|Sy\|_2^2 -
  \|S(x-y)\|_2^2}{2} $$
so that, defining $\|X\|_p = (\E |X|^p)^{1/p}$ (which is a norm for $p\ge 1$ by Minkowski's inequality),
\begin{align*}
\|\inprod{Sx, Sy} - \inprod{x, y}\|_\ell &=
\frac 12 \cdot \left\|(\|Sx\|_2^2 - 1) +
      (\|Sy\|_2^2 - 1) -
  (\|S(x-y)\|_2^2 - \|x-y\|_2^2)\right\|_\ell \\
&{} \le \frac 12\cdot\left(\left\|\|Sx\|_2^2 - 1\right\|_\ell + \left\|\|Sy\|_2^2 - 1\right\|_\ell +
  \left\|\|S(x-y)\|_2^2 - \|x-y\|_2^2\right\|_\ell\right)\\
&{} \le \frac 12\cdot \left(\eps\cdot \delta^{1/\ell} + \eps\cdot\delta^{1/\ell} + \|x -y\|_2^2 \cdot \eps\cdot\delta^{1/\ell}\right)\\
&{} \le 3\eps\cdot \delta^{1/\ell}
\end{align*}
Now,
if $A$ has $n$ columns and $B$ has $m$ columns, label the columns of
$A$ as $x_1,\ldots,x_n\in\R^d$ and the columns of $B$ as
$y_1,\ldots,y_m\in\R^d$.
Define the random variable $X_{i,j} = 1/(\|x_i\|_2\|y_j\|_2)\cdot
(\inprod{Sx_i, Sy_j} -
\inprod{x_i, y_j})$.
Then $\|A^TS^TSB - A^TB\|_F^2 = \sum_{i=1}^n\sum_{j=1}^m \|x_i\|_2^2\cdot
\|y_j\|_2^2\cdot X_{i,j}^2$. Then again by Minkowski's inequality since $\ell/2 \ge 1$,
{
\allowdisplaybreaks
\begin{align*}
 \left\|\|A^TS^TSB - A^TB\|_F^2\right\|_{\ell/2} &=
 \left\|\sum_{i=1}^n\sum_{j=1}^m \|x_i\|_2^2
     \cdot \|y_j\|_2^2\cdot X_{i,j}^2\right\|_{\ell/2}\\
{}&\le \sum_{i=1}^n\sum_{j=1}^m \|x_i\|_2^2\cdot \|y_j\|_2^2 \cdot
\|X_{i,j}^2\|_{\ell/2}\\
{}&= \sum_{i=1}^n\sum_{j=1}^m \|x_i\|_2^2\cdot \|y_j\|_2^2 \cdot
\|X_{i,j}\|_{\ell}^2\\
{}&\le (3\eps \delta^{1/\ell})^2\cdot \left(\sum_{i=1}^n\sum_{j=1}^m\|x_i\|_2^2\cdot \|y_j\|_2^2\right)\\
{}&= (3\eps \delta^{1/\ell})^2\cdot \|A\|_F^2\|B\|_F^2
\end{align*}
}

Then by Markov's inequality and using $\E \|A^TS^TSB - A^TB\|_F^\ell = \|\|A^TS^TSB - A^TB\|_F^2\|_{\ell/2}^{\ell/2}$, 
$$ \Pr\left(\|A^TS^TSB - A^TB\|_F > 3\eps\|A\|_F\|B\|_F\right) \le \left(\frac{1}{3\eps\|A\|_F\|B\|_F}\right)^\ell\cdot \E\|A^TS^TSB - A^TB\|_F^\ell \le \delta .$$
\end{proof}

\begin{remark}
\textup{
Often when one constructs a JL distribution $\mathcal{D}$ over
$k\times d$ matrices, it is shown that for all $x$ with $\|x\|_2=1$
and for all $\eps>0$,
$$
\Pr_{S\sim
  \mathcal{D}}\left(\left|\|Sx\|_2^2 - 1\right| > \eps\right) <
e^{-\Omega(\eps^2 k + \eps k)} .
$$
Any such distribution
automatically satisfies the $(\eps,e^{-\Omega(\eps^2 k + \eps k)}, \min\{\eps^2 k, \eps k\})$-JL moment
property for any $\eps>0$
by converting the tail bound into a moment bound via integration by parts.
}
\end{remark}

\begin{remark}\RemarkName{ose}
\textup{
After this work there was interest in finding sparse {\em oblivious subspace embeddings}, i.e.\ a randomized and sparse $S\in\R^{k\times n}$ such that for any $U\in\R^{n\times d}$ with orthonormal columns, $\Pr(\|(SU)^T(SU) - I\| > \eps) < \delta$. Here the norm is $\ell_2$ to $\ell_2$ operator norm, and thus $\|(SU)^T(SU) - I\| \le \eps$ implies that $(1-\eps)\|x\|_2^2 \le \|Sx\|_2^2 \le (1+\eps)\|x\|_2^2$ for all $x$ in the column span of $U$. It was shown in \cite{CW13,MM13,NN13b} that such $S$ exists with one non-zero entry per column and $k = O(d^2/(\eps^2\delta))$ rows. It has sinced been pointed out to us by Huy L\^{e} Nguy$\tilde{\hat{\mbox{e}}}$n that this result also follows from \Theorem{improved-matrix}. Indeed, \cite{ThorupZhang12} provides a distribution with $(\eps',\delta,2)$-JL moments with $k = O(\eps'^{-2}\delta^{-1})$ rows, and supported on matrices each with exactly one non-zero entry per column. The claim then follows by applying \Theorem{improved-matrix} with $A=B=U$ and $\eps' = \eps/(3d)$ by noting that $\|U\|_F = \sqrt{d}$ and that operator norm is upper bounded by Frobenius norm.
}
\end{remark}

Now we arrive at the main point of this section. Several algorithms
for approximate linear regression and best
rank-$k$ approximation in \cite{CW09} simply maintain $SA$ as $A$ is
updated, where $S$ comes from the JL distribution with
$\Omega(\log(1/\delta))$-wise independent $\pm 1/\sqrt{k}$ entries.
In fact though, their analyses of their algorithms only use the fact
that this distribution
satisfies the approximate matrix product sketch guarantees of
\Theorem{improved-matrix}. Due to \Theorem{improved-matrix} though, we
know that {\em any} distribution satisfying the
$(\eps,\delta)$-JL moment condition gives an
approximate matrix product sketch. Thus, random Bernoulli matrices may
be replaced with our sparse JL distributions
in this work. We now state some of the algorithmic results
given in \cite{CW09} and describe how our constructions provide
improvements in the update time (the time to process new columns,
rows, or turnstile updates).

As in \cite{CW09}, when stating our results we will ignore the space
and time complexities of storing and evaluating the hash functions
in our JL distributions.  We discuss this issue later in
\Remark{hashing}.

\subsection{Linear regression} 
In this problem we have an $A\in\R^{d\times n}$
and $b\in\R^d$. We would like to compute
a vector $\tilde{x}$ such that
$\|A\tilde{x} - b\|_F \le (1+\eps)\cdot \min_{x^*}\|Ax^* -
   b\|_F$ with probability $1-\delta$.  In \cite{CW09}, it is assumed
   that the entries of $A,b$
   require $O(\log(nd))$ bits of precision to store precisely. 
   Both $A,b$ receive turnstile updates.

Theorem 3.2 of \cite{CW09} proves
that such an $\tilde{x}$ can
be computed with probability $1-\delta$ from $SA$ and $Sb$, where $S$
is drawn from a distribution that simultaneously satisfies both the
$(1/2,\eta^{-r}\delta)$ and 
$(\sqrt{\eps/r},\delta)$-JL moment properties for some fixed
constant $\eta>1$ in their proof, and where $\mathrm{rank}(A) \le r \le n$. Thus due
to \Remark{manyeps}, we have the following.

\begin{theorem}\TheoremName{regression-onepass}
There is a one-pass streaming algorithm for linear regression in the
turnstile model where one maintains a sketch of size
$O(n^2\eps^{-1}\log(1/\delta)\log(nd))$.  Processing each update
requires $O(n + \sqrt{n/\eps}\cdot \log(1/\delta))$ arithmetic operations
and hash function evaluations. 
\end{theorem}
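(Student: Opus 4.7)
The plan is to invoke Theorem~3.2 of \cite{CW09} as a black box. That theorem certifies that a $(1+\eps)$-approximate regression solution can be recovered from $SA$ and $Sb$ provided $S$ simultaneously satisfies the $(1/2,\eta^{-r}\delta)$- and $(\sqrt{\eps/r},\delta)$-JL moment properties, where $\eta>1$ is a fixed constant and $r=\mathrm{rank}(A)\le n$. My task is then reduced to instantiating our sparse JL distribution from \Section{random-hashes} with parameters that satisfy both moment bounds simultaneously, and then accounting for the cost of maintaining $SA$ and $Sb$ under turnstile updates.

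First I would apply \Remark{manyeps} to the index set $\{1,2\}$ with $(\eps_1,\delta_1) = (1/2,\eta^{-r}\delta)$ and $(\eps_2,\delta_2) = (\sqrt{\eps/r},\delta)$. The remark prescribes sparsity $s = C\sup_i \eps_i^{-1}\log(1/\delta_i)$ and target dimension $k = C\sup_i \eps_i^{-2}\log(1/\delta_i)$. Using $\log(1/\delta_1) = \Theta(r+\log(1/\delta))$ and $r\le n$, the first pair contributes $O(r+\log(1/\delta))$ to the $s$-supremum while the second contributes $O(\sqrt{r/\eps}\log(1/\delta))$, giving $s = O(n+\sqrt{n/\eps}\log(1/\delta))$. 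An analogous calculation (now dominated by the second pair) yields $k = O(n\eps^{-1}\log(1/\delta))$.

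It then remains to read off the claimed space and update-time bounds. The stored state is $SA\in\R^{k\times n}$ together with $Sb\in\R^{k}$; since $A$ and $b$ carry $O(\log(nd))$-bit entries and $S$ is $\pm 1/\sqrt{s}$-valued, after a $\mathrm{poly}(nd)$-length stream every coordinate of the sketch still fits in $O(\log(nd))$ bits, for total space $O(kn\log(nd)) = O(n^2\eps^{-1}\log(1/\delta)\log(nd))$. A turnstile update $(i,j,v)$ to $A$ adds $v\cdot S_{\cdot,i}$ to the $j$-th column of $SA$; since column $i$ of $S$ has exactly $s$ non-zeros whose positions and signs are obtainable from $O(s)$ hash evaluations, the per-update cost is $O(s) = O(n+\sqrt{n/\eps}\log(1/\delta))$ arithmetic operations and hash evaluations, with updates to $b$ handled identically.

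The main (essentially routine) step is the bookkeeping behind the two suprema in \Remark{manyeps}, confirming that $r\le n$ forces the first pair to contribute only the additive $n$ in the sparsity and that the target dimension is dominated by the second pair. Beyond that, no new technical obstacle arises: \Theorem{main-thm} (or \Theorem{graph-thm}) supplies the moment bound for any single $(\eps_i,\delta_i)$, \Remark{manyeps} handles the simultaneous setting, and Theorem~3.2 of \cite{CW09} provides the algorithmic reduction from approximate matrix products to linear regression.
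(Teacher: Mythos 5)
Your proposal is correct and follows the same route the paper takes: invoke Theorem~3.2 of \cite{CW09}, which requires $S$ to simultaneously satisfy the $(1/2,\eta^{-r}\delta)$- and $(\sqrt{\eps/r},\delta)$-JL moment properties, then apply \Remark{manyeps} to instantiate the sparse JL distribution, and finally read off the sparsity $s = O(n + \sqrt{n/\eps}\log(1/\delta))$ and dimension $k = O(n\eps^{-1}\log(1/\delta))$ bounds using $r\le n$. You have merely made explicit the arithmetic behind the two suprema that the paper treats as immediate.
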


\Theorem{regression-onepass} improves the update complexity of
\cite{CW09}, which was $O(n\eps^{-1}\log(1/\delta))$.

\subsection{Low rank approximation}
In this problem, we have an $A\in\R^{d\times n}$ of rank $\rho$ with
entries that require precision $O(\log(nd))$ to store.  We would like
to compute the best rank-$r$ approximation $A_r$ to $A$. We define
$\Delta_r \eqdef \|A - A_r\|_F$ as the error of $A_r$.  We relax the
problem by only requiring that we compute a matrix $A'_r$ such that
$\|A - A'_r\|_F \le (1+\eps)\Delta_r$ with probability $1-\delta$ over
the randomness of the algorithm.

\paragraph{Two-pass algorithm:}

Theorem 4.4 of \cite{CW09} gives
a $2$-pass algorithm where in the first pass, one maintains
$SA$ where $S$
is drawn from a distribution that simultaneously satisfies both the
$(1/2,\eta^{-r}\delta)$ and 
$(\sqrt{\eps/r},\delta)$-JL moment properties for some fixed constant $\eta>1$ in their proof.  
It is also assumed that $\rho \ge 2r+1$.
The first pass is thus
sped up again as in \Theorem{regression-onepass}.

\paragraph{One-pass algorithm for column/row-wise updates:}

Theorem 4.5 of \cite{CW09} gives a one-pass algorithm in the case that
$A$ is seen either one whole column or row at a time. The
algorithm maintains both $SA$ and $SAA^T$ where $S$
is drawn from a distribution that simultaneously satisfies both the
$(1/2,\eta^{-r}\delta)$ and 
$(\sqrt{\eps/r},\delta)$-JL moment properties. This implies the
following.

\begin{theorem}\TheoremName{onepass-column}
There is a one-pass streaming algorithm for approximate low rank
approximation with row/column-wise updates
 where one maintains a sketch of size
$O(r\eps^{-1}(n+d)\log(1/\delta)\log(nd))$.  Processing each update
requires $O(r + \sqrt{r/\eps}\cdot \log(1/\delta))$ amortized arithmetic
operations
and hash function evaluations per entry of $A$. 
\end{theorem}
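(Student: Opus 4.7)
The plan is to invoke Theorem 4.5 of \cite{CW09} as a black box, instantiating the sketching matrix $S$ using our sparse JL family from \Theorem{main-thm} (equivalently \Theorem{graph-thm}). That theorem reduces one-pass low rank approximation under row- or column-wise updates to maintaining the two sketches $SA$ and $SAA^T$, provided $S\in\R^{k\times d}$ is drawn from a distribution that simultaneously satisfies the $(1/2,\eta^{-r}\delta)$-JL moment property and the $(\sqrt{\eps/r},\delta)$-JL moment property for some fixed constant $\eta>1$. Once $S$ is chosen with these two properties, the correctness of the algorithm and the precision analysis of its entries are already established in \cite{CW09}.

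The main step is to select $k$ and the column sparsity $s$ small enough that both JL moment conditions hold at once. By \Remark{manyeps}, it is enough to take
\[
 s \;=\; C\cdot \max\bigl\{2\log(\eta^r/\delta),\; \sqrt{r/\eps}\cdot \log(1/\delta)\bigr\} \;=\; O\!\left(r+\sqrt{r/\eps}\cdot \log(1/\delta)\right),
\]
\[
 k \;=\; C\cdot \max\bigl\{4\log(\eta^r/\delta),\; (r/\eps)\log(1/\delta)\bigr\} \;=\; O\!\left((r/\eps)\log(1/\delta)\right),
\]
for a sufficiently large absolute constant $C$ (using that $r\le r\eps^{-1}\log(1/\delta)$ to absorb the first term in the bound on $k$).

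The sketch-size bound is then immediate: $SA\in\R^{k\times n}$ and $SAA^T\in\R^{k\times d}$, and each entry requires $O(\log(nd))$ bits to store (by the same numerical analysis used in \cite{CW09}), so the total space is $O(k(n+d)\log(nd)) = O(r\eps^{-1}(n+d)\log(1/\delta)\log(nd))$ bits, as claimed. For the update complexity, column sparsity of $S$ is the key gain: whenever a fresh row or column of $A$ with $t$ nonzero entries arrives, updating $SA$ touches only the $s$ nonzero rows of the relevant column of $S$, costing $O(ts)$ arithmetic operations and hash evaluations and amortizing to $s=O(r+\sqrt{r/\eps}\log(1/\delta))$ per entry of $A$. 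The remaining bookkeeping required to maintain $SAA^T$ is handled within the same amortized budget by the analysis of \cite{CW09}, yielding the stated per-entry update cost.

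The main obstacle is essentially a parameter-matching exercise: one must verify that \Remark{manyeps} really does realize both required JL moment properties simultaneously with a single $(k,s)$ choice, and that replacing the dense $\pm 1/\sqrt{k}$ Bernoulli matrix of \cite{CW09} by our sparse construction preserves their update analysis so that the savings from column sparsity propagate into the per-entry amortized cost. Both are straightforward given \Theorem{main-thm}, \Theorem{graph-thm}, and \Remark{manyeps}, so no new probabilistic estimates are needed beyond those already proved in \Section{random-hashes}.
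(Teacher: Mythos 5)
Your proof matches the paper's approach: the paper's entire justification for this theorem is a single sentence ("This implies the following") after noting that Theorem~4.5 of \cite{CW09} requires $S$ to simultaneously satisfy the $(1/2,\eta^{-r}\delta)$ and $(\sqrt{\eps/r},\delta)$-JL moment properties, and you correctly instantiate \Remark{manyeps} to get $s = O(r+\sqrt{r/\eps}\log(1/\delta))$ and $k = O(r\eps^{-1}\log(1/\delta))$, from which the sketch size and the per-entry amortized update cost follow. Your appeal to the \cite{CW09} analysis for the $SAA^T$ bookkeeping is exactly the same (implicit) reliance the paper makes, so there is no mismatch with the paper's argument.
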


\Theorem{onepass-column} improves the amortized update complexity of
\cite{CW09}, which was $O(r\eps^{-1}\log(1/\delta))$.

\paragraph{Three-pass algorithm for row-wise updates:}

Theorem 4.6 of \cite{CW09} gives a three-pass algorithm using less
space in the case that $A$ is seen one row at a time.  Again, the
first pass simply maintains $SA$ where $S$ is drawn from a
distribution that satisfies both the
$(1/2,\eta^{-r}\delta)$ and 
$(\sqrt{\eps/r},\delta)$-JL moment properties. This pass is sped up
using our sparser JL distribution.

\paragraph{One-pass algorithm in the turnstile model, bi-criteria:}
Theorem 4.7 of \cite{CW09} gives a one-pass algorithm under turnstile
updates where $SA$ and $RA^T$ are maintained in the stream.  $S$
is drawn from a distribution satisfying both the $(1/2,
\eta^{-r\log(1/\delta)/\eps}\delta)$ and
$(\eps/\sqrt{r\log(1/\delta)},\delta)$-JL moment
properties. $R$ is drawn from a distribution satisfying both the
$(1/2, \eta^{-r}\delta)$ and
$(\sqrt{\eps/r}, \delta)$-JL moment properties. Theorem
4.7 of \cite{CW09} then shows how to compute a matrix of rank
$O(r\eps^{-1}\log(1/\delta))$ which achieves the desired error
guarantee given $SA$ and $RA^T$.

\paragraph{One-pass algorithm in the turnstile model:}
Theorem 4.9 of \cite{CW09} gives a one-pass algorithm under turnstile
updates where $SA$ and $RA^T$ are maintained in the stream.  $S$
is drawn from a distribution satisfying both the $(1/2,
\eta^{-r\log(1/\delta)/\eps^2}\delta)$ and
$(\eps\sqrt{\eps/(r\log(1/\delta))},\delta)$-JL moment
properties. $R$ is drawn from a distribution satisfying both the
$(1/2, \eta^{-r}\delta)$ and
$(\sqrt{\eps/r}, \delta)$-JL moment properties. Theorem
4.9 of \cite{CW09} then shows how to compute a matrix of rank
$r$ which achieves the desired error guarantee given $SA$ and $RA^T$.

\begin{remark}\RemarkName{hashing}
\textup{
In the algorithms above, we counted the number of hash function
evaluations that must be performed. 
We use our construction in \Figure{pictures}(c), which uses
$2\log(1/\delta)$-wise independent hash functions.
Standard constructions of $t$-wise
independent hash functions over universes with elements fitting in a
machine word require $O(t)$ time to evaluate \cite{CW79}.  In our
case, this
would blow up our update time by factors such as $n$ or $r$, which could be
large.  Instead, we use fast multipoint evaluation of polynomials. The
standard construction \cite{CW79} of our desired hash functions
mapping some domain $[z]$ onto itself for $z$ a power of $2$ takes a
degree-$(t-1)$ polynomial $p$ with random coefficients in $\mathbb{F}_z$.
The hash function evaluation at some point $y$ is then the evaluation
$p(y)$ over $\mathbb{F}_z$. \Theorem{fastmult} below states that $p$
can be evaluated at $t$ points in total time $\tilde{O}(t)$. We note
that in the theorems above, we are always required to evaluate some
$t$-wise independent hash function on many more than $t$ points per
stream update.  Thus, we can group these evaluation points into
groups of size $t$ then perform fast multipoint evaluation for each
group. We borrow this idea from \cite{KNPW11}, which used 
it to
give a fast algorithm for moment estimation in data streams.
}
\end{remark}

\begin{theorem}[{\cite[Ch. 10]{GG99}}]\TheoremName{fastmult}
Let $\mathbf{R}$ be a ring, and let $q\in \mathbf{R}[x]$ be a
degree-$t$ polynomial. Then, given distinct
$x_1,\ldots,x_t\in\mathbf{R}$, all the values $q(x_1),\ldots,q(x_t)$
can be computed using $O(t\log^2t\log\log t)$ operations
over $\mathbf{R}$.
\end{theorem}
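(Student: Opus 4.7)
The plan is to follow the classical subproduct-tree approach. First I would build a balanced binary tree $T$ whose leaves are labeled $x_1,\ldots,x_t$ (padding up to the next power of two if necessary), and at each internal node $v$ I would store the polynomial $M_v(x) = \prod_{i \in L(v)} (x - x_i)$, where $L(v)$ is the set of leaf indices in the subtree rooted at $v$. These polynomials are built bottom-up: $M_v = M_{v_L} \cdot M_{v_R}$ where $v_L, v_R$ are the children of $v$. The root stores $M(x) = \prod_i (x - x_i)$, a degree-$t$ polynomial.

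Once the tree is built, I would evaluate $q$ by a top-down recursion: at each node $v$, compute $q_v(x) \defeq q(x) \bmod M_v(x)$ via division with remainder, starting with $q_{\text{root}} = q \bmod M$. The key invariant is $q_v \equiv q \pmod{M_v}$, so at a leaf corresponding to $x_i$ we get $q_v = q \bmod (x - x_i) = q(x_i)$. The recursion at an internal node $v$ with children $v_L, v_R$ just computes $q_{v_L} = q_v \bmod M_{v_L}$ and $q_{v_R} = q_v \bmod M_{v_R}$; correctness follows from $M_{v_L} \mid M_v$.

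For the cost analysis, I would invoke two black-box subroutines working over an arbitrary (commutative) ring $\mathbf{R}$: multiplication of two degree-$n$ polynomials in $M(n) = O(n \log n \log\log n)$ ring operations (Sch\"{o}nhage--Strassen / Cantor--Kaltofen), and division with remainder of a degree-$n$ polynomial by another in the same $O(M(n))$ bound (via Newton iteration to compute the reciprocal of the reverse of the divisor, then one multiplication). At level $\ell$ of $T$ (counting from the leaves), each of the $t/2^\ell$ nodes holds a polynomial of degree $2^\ell$, so both the bottom-up multiplications building the tree and the top-down remainders each cost $O(M(2^\ell))$ per node, totaling $O(t/2^\ell \cdot M(2^\ell)) = O(M(t))$ per level. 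Summing over the $O(\log t)$ levels gives $O(M(t)\log t) = O(t \log^2 t \log\log t)$ operations, as claimed.

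The main obstacle is avoiding any use of field-specific primitives (inverses, roots of unity), since the theorem is stated over an arbitrary ring $\mathbf{R}$. Fast multiplication over a general ring still goes through via the Cantor--Kaltofen algorithm at the stated cost, so the heart of the matter is fast division with remainder: I would implement it by the standard Newton-iteration recipe, but being careful to run the iteration only to the precision needed and to note that the reciprocal of the reversed divisor (whose constant term is $1$) exists as a formal power series over any ring. Once these two subroutines are in hand, the subproduct-tree argument above is purely combinatorial and goes through unchanged.
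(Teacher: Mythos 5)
Your proof is correct. The paper does not reprove this result but cites it directly from \cite[Ch.~10]{GG99}; your subproduct-tree argument is exactly the proof given there, and you have correctly isolated the one delicate point over a general ring---namely that each $M_v$ is monic, so division with remainder goes through via Newton iteration on the reversal (whose constant term $1$ is a unit in any ring) without field inverses, while Cantor--Kaltofen supplies the needed $O(n\log n\log\log n)$ multiplication over an arbitrary commutative ring.
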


\section{Open Problems}\SectionName{open}
In this section we state two explicit open problems. For the first, observe that our graph construction is quite similar to a sparse JL construction of Achlioptas \cite{Achlioptas03}. The work of \cite{Achlioptas03} proposes a random normalized sign matrix where each column has an {\em expected} number $s$ of non-zero entries, so that in the notation of this work, the $\eta_{i,j}$ are i.i.d.\ Bernoulli with expectation $s/k$. Using this construction, \cite{Achlioptas03} was able to achieve $s = k/3$ without causing $k$ to increase over analyses of dense constructions, even by a constant factor. Meanwhile, our graph construction requires that there be exactly $s$ non-zero entries per column. This sole change was the reason we were able to obtain better asymptotic bounds on the sparsity of $S$ in this work, but in fact we conjecture an even stronger benefit than just asymptotic improvement. The first open problem is to resolve the following conjecture.

\begin{conjecture}
Fix a positive integer $k$. For $x\in\R^d$, define $Z_{x,s}^{A}$ as the error random variable $|\|Sx\|_2^2 - \|x\|_2^2|$ when $S$ is the sparse construction of \cite{Achlioptas03} with sparsity parameter $s$.  Let $Z_{x,s}^{G}$ be similarly defined, but when using our graph construction. Then for any $x\in\R^d$ and any $s\in[k]$, $Z_{x,s}^A$ stochastically dominates $Z_{x,s}^G$. That is, for all $x\in\R^d$, $s\in[k]$, $\lambda > 0$, $\Pr(Z_{x,s}^A > \lambda) \ge \Pr(Z_{x,s}^G > \lambda)$.
\end{conjecture}

A positive resolution of this conjecture would imply that not only does our graph construction obtain better {\em asymptotic} performance than \cite{Achlioptas03}, but in fact obtains stronger performance in a very definitive sense.

The second open problem is the following. Recall that the ``metric Johnson-Lindenstrauss lemma'' \cite{JL84} states that for any $n$ vectors in $\R^d$, there is a linear map into $\R^k$ for $k = O(\eps^{-2}\log n)$ which preserves all pairwise Euclidean distances of the $n$ vectors up to $1\pm\eps$. \Lemma{jl-lemma} implies this metric JL lemma by setting $\delta < 1/\binom{n}{2}$ then performing a union bound over all $\binom{n}{2}$ pairwise difference vectors. Alon showed that $k = \Omega(\eps^{-2}\log n / \log(1/\eps))$ is necessary \cite{Alon03}. Our work shows that metric JL is also achievable where every column of the embedding matrix has at most $s = O(\eps^{-1}\log n)$ non-zeroes, and this is also known to be tight up to an $O(\log(1/\eps))$ factor \cite{NN13}. Thus, for metric JL, the lower bounds for both $k$ and $s$ are off by $O(\log(1/\eps))$ factors. Meanwhile, for the form of the JL lemma in \Lemma{jl-lemma} where one wants to succeed on any fixed vector with probability $1-\delta$ (the ``distributional JL lemma''), the tight lower bound on $k$ of $\Omega(\eps^{-2}\log(1/\delta))$ is known \cite{JW13,KMN11}. Thus it seems that obtaining lower bounds for distributional JL is an easier task.

\bigskip

\noindent \textbf{Question:} Can we obtain a tight lower bound of $s = \Omega(\eps^{-1}\log(1/\delta))$ for distributional JL in the case that $k = O(\eps^{-2}\log(1/\delta)) < d/2$, thus removing the $O(\log(1/\eps))$ factor gap?

\section*{Acknowledgments}
We thank Venkat Chandar, Venkatesan Guruswami, Swastik Kopparty, and
Madhu Sudan for useful discussions about error-correcting codes,
David Woodruff for answering several questions about
\cite{CW09}, Piotr Indyk and Eric Price for useful comments
and discussion, and Mark Rudelson and Dan Spielman for both pointing out the similarity of our proof of \Lemma{random-graphs} to the types of arguments that are frequently used to analyze the eigenvalue spectrum of random matrices. We thank Huy L\^{e} Nguy$\tilde{\hat{\mbox{e}}}$n for pointing out \Remark{ose}. We also thank the anonymous referees for many helpful comments.

\bibliographystyle{plain}

\bibliography{../allpapers}

\end{document}